\documentclass[envcountsame, fleqn]{llncs}
\usepackage[utf8]{inputenc}

\usepackage[hyphens]{url}
\usepackage{hyperref}
\usepackage{hypcap}
\usepackage{xspace}
\usepackage{amsmath}
\usepackage{amssymb}
\usepackage{multicol}
\usepackage{braket}
\usepackage{mathtools}
\usepackage{enumerate}
\usepackage{tikz}
\usetikzlibrary{chains}
\usetikzlibrary{calc}


\newcommand{\redu}[1]{\protect\ensuremath{\leq^{\textnormal{#1}}}}

\makeatletter
\newcommand*{\defeq}{\mathrel{%
  \rlap{\raisebox{0.3ex}{$\m@th\cdot$}}%
  \raisebox{-0.3ex}{$\m@th\cdot$}}%
  =}\makeatother
\makeatletter
\newcommand*{\eqdef}{=\mathrel{%
  \raisebox{0.3ex}{$\m@th\cdot$}%
  \llap{\raisebox{-0.3ex}{$\m@th\cdot$}}}%
  }\makeatother
\makeatletter
\newcommand*{\defeqv}{\mathrel{%
  \rlap{\raisebox{0.3ex}{$\m@th\cdot$}}%
  \raisebox{-0.3ex}{$\m@th\cdot$}}%
  \Longleftrightarrow}\makeatother
\makeatletter
\newcommand*{\eqvdef}{\Longleftrightarrow\mathrel{%
  \raisebox{0.3ex}{$\m@th\cdot$}%
  \llap{\raisebox{-0.3ex}{$\m@th\cdot$}}}%
  }\makeatother

\newcommand{\N}{\protect\ensuremath{\mathbb{N}}\xspace}
\newcommand\classFont[1]{\textnormal{#1}}
\renewcommand{\P}{\protect\ensuremath{\classFont{P}}\xspace}
\newcommand{\NP}{\protect\ensuremath{\classFont{NP}}\xspace}
\newcommand{\NumP}{\protect\ensuremath{{\classFont{\#P}}}\xspace}

\newcommand{\AC}{\protect\ensuremath{{\classFont{AC}^0}}\xspace}
\newcommand{\NC}{\protect\ensuremath{{\classFont{NC}^1}}\xspace}
\newcommand{\FOarb}{\protect\ensuremath{\classFont{FO[Arb]}}\xspace}
\newcommand{\FO}{\protect\ensuremath{\classFont{FO}}\xspace}
\newcommand{\NumAC}{\protect\ensuremath{{\classFont{\#AC}^0}}\xspace}
\newcommand{\WinFOarb}{\protect\ensuremath{\classFont{\#Win-FO[Arb]}}\xspace}
\newcommand{\SkolemFO}{\protect\ensuremath{\classFont{\#Skolem-FO}}\xspace}
\newcommand{\SkolemFOarb}{\protect\ensuremath{\classFont{\#Skolem-FO[Arb]}}\xspace}
\newcommand{\WinFO}{\protect\ensuremath{\classFont{\#Win-FO}}\xspace}
\newcommand{\WinFOPT}{\protect{\classFont{\#Win-FO}[\PLUS,\TIMES]}\xspace}

\newcommand{\FOWinFOarb}{\protect\ensuremath{\classFont{FOCW[Arb]}}\xspace}
\newcommand{\TC}{\protect\ensuremath{\classFont{TC}^0}\xspace}
\newcommand{\CWin}{\protect\ensuremath{\classFont{\#Win}}}
\newcommand{\struc}{\protect\ensuremath{\textrm{STRUC}}}
\newcommand{\tString}{\protect\ensuremath{\tau_{\textrm{string}}}\xspace}
\newcommand{\tCirc}{\protect\ensuremath{\tau_{\textrm{circ}}}}
\newcommand{\PLUS}{+}
\newcommand{\TIMES}{\times}
\newcommand{\BIT}{\mathrm{BIT}}
\newcommand{\tu}[1]{\overline{#1}}


\newcommand{\eg}{e.g.\@\xspace}
\newcommand{\ST}{such that\@\xspace}

\newcommand{\fa}{for all\@\xspace}
\newcommand{\stfa}{such that for all\@\xspace}
\newcommand{\wLOG}{without loss of generality\@\xspace}
\newcommand{\WLOG}{Without loss of generality\@\xspace}

\begin{document}

\title{A Model-Theoretic Characterization of Constant-Depth Arithmetic Circuits}

\author{Anselm Haak \and Heribert Vollmer}
\institute{Theoretische Informatik, Leibniz Universität Hannover,\\Appelstraße, D-30167, Germany\\\email{(haak|vollmer)@thi.uni-hannover.de}}

\maketitle

\begin{abstract}
We study the class $\NumAC$ of functions computed by constant-depth polynomial-size arithmetic circuits of unbounded fan-in addition and multiplication gates.
No model-theoretic characterization for arithmetic circuit classes is known so far. Inspired by Immerman's characterization of the Boolean circuit class \AC, we remedy this situation and develop such a characterization of \NumAC. Our characterization can be interpreted as follows: Functions in \NumAC are exactly those functions counting winning strategies in first-order model checking games. A consequence of our results is a new model-theoretic characterization of \TC, the class of languages accepted by constant-depth polynomial-size majority circuits.
\end{abstract}

\section{Introduction}

Going back to questions posed by Heinrich Scholz and Günter Asser in the early 1960s, Ronald Fagin \cite{FaginThm} laid the foundations for the areas of finite model theory and descriptive complexity theory. He characterized the complexity class \NP as the class of those languages that can be defined in predicate logic by existential second-order sentences: $\NP=\mathrm{ESO}$. His result is the cornerstone of a wealth of further characterizations of complexity classes, cf.~the monographs \cite{EbbFlumThomBuch,ImmermanBuch,LibkinBuch}.

Fagin's Theorem has found a nice generalization: Considering first-order formulae with a free relational variable, instead of asking if there \emph{exists} an assignment to this variable that makes the formula true (ESO), we now ask to \emph{count} how many assignments there are. In this way, the class \NumP is characterized: $\NumP=\#\FO$ \cite{DescCompNumP}. 

But also ``lower'' complexity classes, defined using families of Boolean circuits, have been considered in a model-theoretical way. Most important for us is the characterization of the class $\AC$, the class of languages accepted by families of Boolean circuits of unbounded fan-in, polynomial size and constant depth, by first-order formulae. This correspondence goes back to Immerman and his co-authors \cite{imm87,baimst90}, but was somewhat anticipated by \cite{gule84}. Informally, this may be written as $\AC=\FO$; and there are  two ways to make this formally correct---a non-uniform one: $\AC=\FOarb$, and a uniform one: $\FO[\PLUS, \TIMES]\text{-uniform }\AC=\FO[\PLUS,\TIMES]$ (for details, see below).

In the same way as $\NumP$ can be seen as the counting version of $\NP$, there is a counting version of $\AC$, namely $\NumAC$, the class of those functions counting proof-trees of $\AC$-circuits. A proof-tree is a minimal sub-circuit of the original circuit witnessing that it outputs $1$. Equivalently, $\NumAC$ can be characterized as those functions computable by polynomial-size constant-depth circuits with unbounded fan-in $\PLUS$ and $\TIMES$ gates (and Boolean inputs); for this reason we also speak of \emph{arithmetic} circuit classes. 

For such arithmetic classes, no model-theoretic characterization is known so far. Our \emph{rationale} is as follows: A Boolean circuit accepts its input if it has at least one proof-tree. An \FO-formula (w.l.o.g.~in prenex normal form) holds for a given input if there are Skolem functions determining values for the existentially quantified variables, depending on those variables quantified to the left. By establishing a one-one correspondence between proof-trees and Skolem functions, we show that the class \NumAC, defined by {counting} proof-trees, is equal to the class of functions counting Skolem functions, or, alternatively, winning-strategies in first-order model-checking games: $\AC = \SkolemFO = \WinFO$. We prove that this equality holds in the non-uniform as well as in the uniform setting.

It seems a natural next step to allow first-order formulae to ``talk'' about winning strategies, i.e., allow access to \WinFO-functions (like to an oracle). We will prove that in doing so, we obtain a new model-theoretic characterization of the circuit class \TC of polynomial-size constant-depth MAJORITY circuits.

This paper is organized as follows: In the upcoming section, we will introduce the relevant circuit classes and logics, and we state characterizations of the former by the latter known from the literature. We will also recall arithmetic circuit classes and define our logical counting classes $\SkolemFO$ and $\WinFO$. Sect.~\ref{sect:non-uniform} proves our characterization of non-uniform $\NumAC$, while Sect.~\ref{sect:uniform} proves our characterization of uniform $\NumAC$. Sect.~\ref{sect:threshold} presents our new characterization of the circuit class $\TC$. Finally, Sect.~\ref{sect:concl} concludes with some open questions.

\section{Circuit Classes, Counting Classes, and Logic}
\label{sect:prelim}

\subsection{Non-uniform Circuit Classes}
\label{subsect:nu-circ}

A relational vocabulary is a tuple $\sigma=(R_1^{a_1}, \dots, R_k^{a_k})$, where $R_i$ are relation symbols and $a_i$ their arities, $1\leq i\leq k$. We define first-order formulae over $\sigma$ as usual (see, \eg, \cite{EbbFlumThomBuch,ImmermanBuch}).
First-order structures fix the set of elements (the universe) as well as interpretations for the relation symbols in the vocabulary. Semantics is defined as usual. For a structure $\mathcal{A}$, $|\mathcal{A}|$ denotes its universe. We only consider finite structures here, which means their universes are finite.

Since we want to talk about languages accepted by Boolean circuits, we will use the \emph{vocabulary}
\[\tString \defeq (\leq^2, S^1)\]
\emph{of binary strings}. A binary string is represented as a structure over this vocabulary as follows: Let $w \in \{0,1\}^*$ with $|w| = n$. Then the structure representing this string has universe $\{0,\dots,n-1\}$, $\leq^2$ is interpreted as the $\leq$-relation on the natural numbers and $x \in S$, iff the $x$-th bit of $w$ is 1. The structure corresponding to string $w$ will be called $\mathcal{A}_w$. Vice versa, structure $\mathcal{A}_w$ is simply encoded by $w$ itself: The bits define which elements are in the $S$-relation---the universe and the order are implicit. This encoding can be generalized to binary encodings of arbitrary $\sigma$-structures $\mathcal{A}$. We will use the notation $\textrm{enc}_\sigma(\mathcal{A})$ for such an encoding.

A Boolean circuit $C$ is a directed acyclic graph (dag), whose nodes (also called gates) are marked with either a Boolean function (in our case $\land$ or $\lor$) or a (possibly negated) query of a particular position of the input. Also, one gate is marked as the output gate. A circuit computes a Boolean function on its input bits by evaluating all gates according to what they are marked with. The value of the output gate gives then the result of the computation of $C$ on $x$. We will denote the function computed by circuit $C$ simply by $C$.

A single circuit computes only a finite Boolean function.
When we want circuits to work on different input lengths, we have to consider families of circuits. A family contains one circuit for any input length $n \in \mathbb{N}$. Families of circuits allow us to talk about languages being accepted by circuits. A circuit family $\mathcal{C} = (C_n)_{n \in \mathbb{N}}$ is said to accept (or decide) the language $L$, if it computes its characteristic function $c_L$:
\[C_{|x|}(x) = c_L(x) \textrm{ for all } x.\]

Since we will describe Boolean circuits by FO-formulae, we define the vocabulary
\[\tCirc \defeq (E^2, G_\land^1, G_\lor^1, \textrm{Input}^2, \textrm{negatedInput}^2, r^1),\]
the \emph{vocabulary of Boolean circuits}. The relations are interpreted as follows:
\begin{itemize}
  \item $E(x,y)$:  $y$ is a child of $x$
  \item $G_\land(x)$: gate $x$ is an and-gate
  \item $G_\lor(x)$: gate $x$ is an or-gate
  \item $\textrm{Input}(x, i)$: the $i$-th input is associated with gate $x$
  \item $\textrm{negatedInput}(x, i)$: the negated $i$-th input is associated with gate $x$
  \item $r(x)$: $x$ is the root of the circuit
\end{itemize}

The definition from \cite{ImmermanBuch} is more general in that it allows negations to occur arbitrary in a circuit. Here we only consider circuits in \emph{negation normal form}, i.e., negations are only applied to input bits. This restriction is customary for arithmetic circuits like for the class \NumAC to be defined below. On the other hand, we associate leaves with input positions instead of having a predicate directly determining the truth value of leaves. Still, the classes defined from these stay the same.

The complexity classes in circuit complexity are classes of languages that can be decided by circuit families with certain restrictions on their depth or size. The depth here is the length of a longest path from any input gate to the output gate of a circuit and the size is the number of non-input gates in a circuit. Depth and size of a circuit family are defined as functions accordingly.

\begin{definition}
The class \AC is the class of all languages decidable by Boolean circuit families of constant depth and polynomial size.
\end{definition}

In this definition we do not have any restrictions on the computability of the function $n\mapsto\langle C_n\rangle$, i.e., the function computing (an encoding of) the circuit for a given input length. This phenomenon is referred to as \emph{non-uniformity}, and it leads to undecidable problems in \AC. In first-order logic there is a class that has a similar concept, the class \FOarb, to be defined next.

For arbitrary vocabularies $\tau$ disjoint from $\tString$, we consider formulae over $\tString \cup \tau$ and our input structures will always be \tString-structures $\mathcal{A}_w$ for a string $w \in \{0,1\}^*$. Here, $\tString \cup \tau$ denotes the vocabulary containing all symbols from both vocabularies without repetitions. To evaluate a formula we additionally specify a (non-uniform) family $I = (I_n)_{n\in \mathbb{N}}$ of interpretations of the relation symbols in $\tau$. For $\mathcal{A}_w$ and $I$ as above we now evaluate $\mathcal{A}_w \vDash_I \varphi$ by using the universe of $\mathcal{A}_w$ and the interpretations from both $\mathcal{A}_w$ and $I_{|w|}$. The language defined by a formula $\varphi$ and a family of interpretations $I$ is
\[L_I(\varphi) \defeq \{w\in \{0,1\}^* \mid \mathcal{A}_w \vDash_I \varphi\}\]
This leads to the following definition of \FOarb (equivalent to the one given in \cite{HeribertBuch}):

\begin{definition}\label{def:FOarb}
A language $L$ is in \FOarb, if there are an arbitrary vocabulary $\tau$ disjoint from $\tString$, a first-order sentence $\varphi$ over $\tString \cup \tau$ and a family $I = (I_n)_{n\in \mathbb{N}}$ of interpretations of the relation symbols in $\tau$ \ST
\[L_I(\varphi) = L.\]
\end{definition}

It is known that the circuit complexity class \AC and the model theoretic class \FOarb are in fact the same (see, \eg, \cite{HeribertBuch}):
\begin{theorem}
$\AC = \FOarb$.
\end{theorem}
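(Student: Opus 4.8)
The plan is to prove the two inclusions $\FOarb \subseteq \AC$ and $\AC \subseteq \FOarb$ separately, in each case translating the defining object of one class into that of the other uniformly in the input length $n$, and then invoking standard facts about first-order model checking on structures of fixed size.

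\medskip

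\noindent\textbf{From $\FOarb$ to $\AC$.} Let $L = L_I(\varphi)$ for a sentence $\varphi$ over $\tString\cup\tau$ and a family $I=(I_n)_{n\in\N}$. For each fixed $n$, I would build a Boolean circuit $C_n$ that, on input $w\in\{0,1\}^n$, evaluates $\mathcal{A}_w\vDash_I\varphi$. The construction is the usual unfolding of the formula: bring $\varphi$ to a normal form, then recursively build a subcircuit for every subformula and every assignment of its free variables to elements of the universe $\{0,\dots,n-1\}$. A quantifier $\exists x\,\psi$ becomes an $\lor$-gate of fan-in $n$ over the subcircuits for $\psi$ under the $n$ possible values of $x$; $\forall x$ becomes an $\land$-gate; $\land$ and $\lor$ of subformulae are handled directly; an atom $S(t)$ or $\leq(t_1,t_2)$ becomes an input query (or its negation), noting that the interpretation of $\leq$ and of the $\tau$-symbols is a fixed truth value once $n$ is fixed and hence gets hard-wired as a constant $0$ or $1$ gate; atoms over $\tau$ are resolved using $I_n$. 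Since $\varphi$ has constantly many variables and constant quantifier depth, the depth of $C_n$ is constant and its size is $n^{O(1)}$; the family $(C_n)_n$ witnesses $L\in\AC$, with no uniformity condition needed because $I$ itself is arbitrary.

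\medskip

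\noindent\textbf{From $\AC$ to $\FOarb$.} Conversely, let $L$ be decided by a constant-depth polynomial-size family $(C_n)_n$, say of depth $d$ and size bounded by $n^c$. I would encode the gates of $C_n$ by $k$-tuples of universe elements (for a suitable constant $k$ with $n^k\ge n^c$) and put into $\tau$ relation symbols that describe $C_n$: unary (i.e.\ $k$-ary over the tuple encoding) predicates $G^{(\ell)}_\land, G^{(\ell)}_\lor$ marking $\land$- and $\lor$-gates at level $\ell$, an edge relation $E$, the input-association relations, and a constant for the root — essentially the relations of $\tCirc$, stratified by level. The family $I=(I_n)_n$ is defined by literally reading off $C_n$; because the $\AC$ family is non-uniform, $I$ is allowed to be non-uniform, which is exactly what $\FOarb$ permits. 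Then a single fixed first-order sentence $\varphi$ over $\tString\cup\tau$ expresses ``the root gate of $C_n$ evaluates to $1$ on input $w$'': since the depth $d$ is a fixed constant, one can write $d$ nested blocks of quantifiers, each block picking a child of the current gate and branching on whether the current gate is an $\land$-gate (all children must evaluate to $1$) or an $\lor$-gate (some child must), bottoming out at the input-query relations evaluated against $S$ and the order. Thus $\mathcal{A}_w\vDash_I\varphi$ iff $C_n(w)=1$, giving $L\in\FOarb$.

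\medskip

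\noindent\textbf{Main obstacle.} The routine parts are the inductive circuit-from-formula construction and the bookkeeping of depth and size. The genuinely delicate point is the $\AC\to\FOarb$ direction: a first-order sentence has a \emph{fixed} quantifier depth, so it can only follow a path of \emph{bounded} length through the circuit — which is fine precisely because the circuit depth is a constant $d$, but it forces the level-stratified encoding (one relation per level, or a level coordinate in the tuple) so that the sentence ``knows'' how deep it currently is, and one must check that polynomial circuit size is absorbed by choosing the arity $k$ large enough. Getting this encoding right, and making sure the fixed sentence correctly simulates the alternating $\land/\lor$ evaluation along every root-to-leaf path of length $d$, is where the real care is needed; after that, correctness is a straightforward induction on the level.
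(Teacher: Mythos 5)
Your proposal is correct and is essentially the standard proof of Immerman's $\AC=\FOarb$ correspondence, which the paper does not reprove but only cites; both directions (formula-to-circuit unfolding with hard-wired numerical/$\tau$-atoms, and circuit-to-formula via tuple-encoded gates supplied non-uniformly through $I_n$ with $d$ nested quantifier blocks walking down the constant-depth circuit) match the constructions the paper itself adapts in the proof of its Theorem~\ref{thm:main}. The one point you rightly flag as delicate — that bounded quantifier depth suffices exactly because the circuit depth is constant, given a leveled encoding of polynomially many gates as $k$-tuples — is handled correctly.
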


\subsection{Uniform Circuit Classes}

As already stated, non-uniform circuits are able to solve undecidable problems, even when restricting size and depth of the circuits dramatically. Thus, the non-uniformity somewhat obscures the real complexity of problems. There are different notions of uniformity to deal with this problem: The computation of the circuit $C_{|x|}$ from $x$ must be possible within certain bounds, \eg polynomial time, logarithmic space, logarithmic time. Since we are dealing with FO-formulae, the most natural type of uniformity to use is first-order uniformity, to be defined in this section.

In the logical languages, ``uniformity'' means we now remove the non-uniform family of interpretations from the definition of \FOarb, and replace it with two special symbols for arithmetic, a ternary relation $\PLUS$ (with the intended interpretation $\PLUS(i,j,k)$ iff $i+j=k$) and a ternary relation $\TIMES$ (with the intended interpretation $\TIMES(i,j,k)$ iff $i\cdot j=k$).

\begin{definition}
A language $L$ is in $\FO[\PLUS,\TIMES]$, if there is a first-order sentence $\varphi$ over $\tString \cup (\PLUS,\TIMES)$ \ST
\[L = L_I(\varphi),\]
where $I$ interprets $\PLUS$ and $\TIMES$ in the intended way.
\end{definition}

In the circuit world, as mentioned, ``uniformity'' means 
we can access from any given input $w$ also the circuit $C_{|w|}$ with limited resources. The way we achieve this is via \emph{FO-interpretations}.

In the following, for any vocabulary $\sigma$, $\struc[\sigma]$ denotes the set of all structures over $\sigma$.

\begin{definition}
Let $\sigma, \tau$ be vocabularies,
$\tau = (R_1^{a_1}, \dots, R_r^{a_r})$,
and let $k \in \mathbb{N}$. A \emph{first-order interpretation} (or \emph{FO-interpretation})
\[I\colon \struc[\sigma] \rightarrow \struc[\tau]\]
is given by a tuple of FO-formulae $\varphi_0, \varphi_1, \dots, \varphi_r$ over the vocabulary $\sigma$. $\varphi_0$ has $k$ free variables and $\varphi_i$ has $k \cdot a_i$ free variables \fa $i \geq 1$. For each structure $\mathcal{A} \in \struc[\sigma]$, these formulae define the structure
\[I(\mathcal{A}) = \bigl( |I(\mathcal{A})|, R_1^{I(\mathcal{A})}, \dots, R_r^{I(\mathcal{A})} \bigr) \in \struc[\tau],\]
where the universe is defined by $\varphi_0$ and the relations are defined by $\varphi_1, \dots, \varphi_r$ in the following way:
\[|I(\mathcal{A})| = \bigl\{(b^1, \dots, b^k) \bigm| \mathcal{A} \vDash \varphi_0(b^1, \dots, b^k)\bigr\} \mathrm{\ and}\]
\[R_i^{I(\mathcal{A})} = \bigl\{(\tu{b}_1, \dots, \tu{b}_{a_i}) \in |I(\mathcal{A})|^{a_i} \mid \mathcal{A} \vDash \varphi_i(\tu{b}_1, \dots, \tu{b}_{a_i})\bigr\},\]
where the $\tu{b}_i$ are tuples with $k$ components.
\end{definition}

The term ``FO-interpretations'' was used, \eg, in \cite{AnujFPL}. Sometimes they are also referred to as first-order queries, see, e.g., \cite{ImmermanBuch}. 
They are not to be confused with interpretations of relation symbols as in Sect.~\ref{subsect:nu-circ}. It is customary to use the same symbol $I$ in both cases.


Analogously, $\FO[\PLUS,\TIMES]$-interpretations are interpretations given by tuples of $\FO[\PLUS,\TIMES]$-formulae.

\begin{definition}
A circuit family $\mathcal{C} = (C_n)_{n \in \mathbb{N}}$ is said to be 
\emph{$\FO[\PLUS,\TIMES]$-uniform} if there is an $\FO[\PLUS,\TIMES]$-interpretation
\[I\colon \struc[\tString] \rightarrow \struc[\tCirc]\]
mapping from an input word $w$ given as a structure $\mathcal{A}_w$ over \tString to the circuit $C_{|w|}$ given as a structure over the vocabulary \tCirc. 
\end{definition}

Now we can define the $\FO[\PLUS, \TIMES]$-uniform version of \AC:
\begin{definition}
$\FO[\PLUS,\TIMES]$-uniform \AC is the class of all languages that can be decided by $\FO[\PLUS,\TIMES]$-uniform \AC circuit families.
\end{definition}

Thus, if $\mathcal{C} = (C_n)_{n \in \mathbb{N}}$ is an $\FO[\PLUS,\TIMES]$-uniform circuit family
, we can define from any given input structure $\mathcal{A}_w$ also the circuit $C_{|w|}$ in a first-order way. 

Alternatively, both in the definition of $\FO[\PLUS, \TIMES]$ and $\FO[\PLUS, \TIMES]$-uniform \AC, we can replace $\PLUS$ and $\TIMES$ by the binary symbol $\BIT$ with the meaning $\BIT(i,j)$ iff the $i$th bit in the binary representation of $j$ is $1$, giving rise to the same classes, see also \cite{ImmermanBuch}.

Interestingly, uniform \AC coincides with \FO with built-in arithmetic, see, \eg, \cite{ImmermanBuch}:

\begin{theorem}
$\FO[\PLUS,\TIMES]$-uniform $\AC = \FO[\PLUS,\TIMES]$.
\end{theorem}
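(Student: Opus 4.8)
The plan is to prove the two inclusions separately. The cleanest route is to reuse the non-uniform correspondence $\AC = \FOarb$ stated above, noting that \emph{both} directions of its proof are ``formula-uniform'' in a sense that makes the built-in relations needed on the logic side match exactly the built-in relations needed on the circuit side; specializing this observation to the arithmetic relations $\PLUS,\TIMES$ yields the claim. I will, however, describe the constructions concretely, since the bookkeeping is where all the work sits.

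\textbf{From $\FO[\PLUS,\TIMES]$ to $\FO[\PLUS,\TIMES]$-uniform $\AC$.} Given a sentence $\varphi$ over $\tString \cup (\PLUS,\TIMES)$, first bring it to prenex normal form with a quantifier-free matrix in negation normal form. For input length $n$, build the circuit $C_n$ whose gates are indexed by pairs consisting of a position inside $\varphi$ together with a partial assignment of universe elements to the variables quantified so far: an $\exists$-block becomes a layer of $\lor$-gates of fan-in $n$, a $\forall$-block a layer of $\land$-gates, and the matrix, with the assignment substituted in, is realized by a circuit of depth equal to its (constant) size in which each atom $S(x)$ becomes an input-query gate, each atom $\lnot S(x)$ a negated-input gate, and each arithmetic or order atom $\PLUS,\TIMES,\leq,=$ or its negation a constant gate (empty $\land$ for true, empty $\lor$ for false). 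The resulting family has constant depth and polynomial size. The key point is that the wiring of $C_n$, and in particular which constant gates are ``true'', is exactly what the atoms $\PLUS,\TIMES,\leq$ assert; hence the map $\mathcal{A}_w \mapsto C_{|w|}$ is given by an $\FO[\PLUS,\TIMES]$-interpretation $\struc[\tString] \to \struc[\tCirc]$ of dimension bounded by the number of variables of $\varphi$, which is precisely $\FO[\PLUS,\TIMES]$-uniformity. (Equivalently: in the proof of $\FOarb \subseteq \AC$, built-in relations beyond $\PLUS,\TIMES$ are introduced only when $\varphi$ itself uses them, so for $\varphi \in \FO[\PLUS,\TIMES]$ one already obtains an $\FO[\PLUS,\TIMES]$-uniform family.)

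\textbf{From $\FO[\PLUS,\TIMES]$-uniform $\AC$ to $\FO[\PLUS,\TIMES]$.} Let $(C_n)_{n\in\mathbb{N}}$ be an $\FO[\PLUS,\TIMES]$-uniform family of depth $\leq d$, described by an interpretation $I\colon \struc[\tString]\to\struc[\tCirc]$ via formulae of dimension $k$. Gates of $C_{|w|}$ are the $k$-tuples of universe elements satisfying $\varphi_0$, and every $\tCirc$-relation is $\FO[\PLUS,\TIMES]$-definable in $\mathcal{A}_w$ by substituting the interpretation formulae. Express ``$C_{|w|}$ accepts $w$'' by unfolding the circuit evaluation $d$ times: define formulae $\psi_0,\dots,\psi_d$, where $\psi_j(\tu x)$ states ``gate $\tu x$ evaluates to true'' and is the disjunction of ``$\tu x$ is an input gate for some position $i$ with $i\in S$'', ``$\tu x$ is a negated-input gate for some $i$ with $i\notin S$'', ``$\tu x\in G_\land$ and $\psi_{j+1}$ holds of every child of $\tu x$'', and ``$\tu x\in G_\lor$ and $\psi_{j+1}$ holds of some child of $\tu x$''; since every downward path from the root reaches an input gate within $d$ steps, $\psi_d$ needs only the two input-gate disjuncts. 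Then the sentence $\exists\tu x\,(r(\tu x)\land\psi_0(\tu x))$, with all $\tCirc$-atoms replaced by their $\FO[\PLUS,\TIMES]$-definitions in $\mathcal{A}_w$, defines $L_I(\varphi)$ and lies in $\FO[\PLUS,\TIMES]$. (Equivalently, take the $\FOarb$-sentence produced by $\AC \subseteq \FOarb$ for this family and replace the built-in relation that encodes the circuit by its $\FO[\PLUS,\TIMES]$ definition.)

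\textbf{Main obstacle.} The difficulty is bookkeeping rather than any single deep idea: one must fix a uniform encoding of gates as constant-width tuples, check that every condition appearing in both constructions — membership in the circuit's universe, the edge and gate-type predicates, and the arithmetic atoms evaluated on a substituted assignment — is genuinely first-order with $\PLUS$ and $\TIMES$, and verify in the reverse direction that the finite unfolding halts at input gates precisely because $d$ is a constant independent of $n$. It is also worth recording that, by the remark following the definitions, $\PLUS,\TIMES$ may be traded for $\BIT$ throughout without changing either class, so the statement is insensitive to this choice.
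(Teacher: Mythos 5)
The paper states this theorem without proof, citing Immerman's book, and your argument is exactly the standard one given there (quantifier blocks become layers of unbounded fan-in $\lor$/$\land$ gates and vice versa, with the arithmetic atoms of $\varphi$ supplying precisely the built-in relations needed to define the wiring, so the two uses of $\PLUS,\TIMES$ match up); it is correct. The details you leave implicit --- encoding the constantly many formula positions into the gate tuples, realizing constant gates without empty fan-in (e.g.\ via $y\lor\lnot y$ and $y\land\lnot y$ as the paper itself does elsewhere), and relativizing the child quantifiers in the unfolding to the interpretation's universe formula $\varphi_0$ --- are routine bookkeeping.
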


\subsection{Counting Classes}

Building on the previous definitions we next want to define counting classes. The objects counted on circuits are proof trees: A \emph{proof tree} is a minimal subtree showing that a circuit evaluates to true for a given input. For this, we first unfold the circuit into tree shape, and we further require that it is in negation normal form. A proof tree then is a tree we get by choosing for any $\lor$-gate exactly one child and for any $\land$-gate all children, \ST every leaf which we reach in this way is a true literal.

Now, \NumAC is the class of functions that ``count proof trees of \AC circuits'':

\begin{definition}
($\FO[\PLUS,\TIMES]$-uniform) \NumAC is the class that consists of all functions $f \colon \{0,1\}^*\rightarrow\N$ for which there is a  circuit family (an $\FO[\PLUS,\TIMES]$-uniform circuit family) $\mathcal{C} = (C_n)_{n \in \mathbb{N}}$
such that for any $x \in \{0,1\}^*$,
$f(x)$ equals the number of proof trees of $C_{|x|}$ on input $x$.
\end{definition}

It is the aim of this paper to give model-theoretic characterizations of these classes. The only model-theoretic characterization of a counting class that we are aware of is the following:
In \cite{DescCompNumP}, a counting version of $\FO$ was defined, inspired by Fagin's characterization of $\NP$. Functions in this class count assignments to free relational variables in \FO-formulae. However, it is known that $\#\P = \#\FO$, i.e., this counting version of \FO coincides with the much larger counting class $\#\P$ of functions counting accepting paths of nondeterministic polynomial-time Turing machines. It is known that $\NumAC\subsetneq\#\P$.
Thus, we need some weaker notion of counting. 

Suppose we are given a $\tString$-formula $\varphi$ in prenex normal form,
$$\varphi = \exists y_1 \forall z_1 \exists y_2 \forall z_2 \dots \exists y_{k-1} \forall z_{k-1} \exists y_k \,\, \psi(\overline{y}, \overline{z})$$
for quantifier-free $\psi$. If we want to satisfy $\varphi$ in a word model $\mathcal{A}_w$, we have to find an assignment for $y_1$ such that for all $z_1$ we have to find an assignment for $y_2$ \dots such that $\psi$ is satisfied in $\mathcal{A}_w$. Thus, the number of ways to satisfy $\varphi$ consists in the number of picking the suitable $y_i$, depending on the universally quantified variables to the left, such that $\psi$ holds, in other words, the number of Skolem functions for the existentially quantified variables.

\begin{definition}
A function $g\colon \{0,1\}^* \rightarrow \mathbb{N}$ is in the class $\SkolemFOarb$ if there is a vocabulary $\tau$ disjoint from $\tString$, a sequence of interpretations $I = (I_n)_{n \in \mathbb{N}}$ for $\tau$ and a first-order sentence $\varphi$ over $\tString \cup \tau$ in prenex normal form  
\[\varphi = \exists y_1 \forall z_1 \exists y_2 \forall z_2 \dots \exists y_{k-1} \forall z_{k-1} \exists y_k \,\, \psi(\overline{y}, \overline{z})\]
with quantifier-free $\psi$, \stfa $w \in \{0,1\}^*$,
$g(w)$ is equal to the number of tuples $(f_1,\dots,f_k)$ of functions such that 
$$\mathcal{A}_w \ \vDash_I \ \forall z_1 \dots \forall z_{k-1} \ \psi(f_1, f_2(z_1), \dots, f_k(z_1, \dots, z_{k-1}), z_1, \dots, z_{k-1})\}$$
\end{definition}

This means that $\SkolemFOarb$ contains those functions that, for a fixed \FO-formula $\varphi$, map an input $w$ to the number of Skolem functions of $\varphi$ on $\mathcal{A}_w$.

A different view on this counting class is obtained by recalling the well-known game-theoretic approach to first-order model checking. Model checking for FO-formulae (in prenex normal form) can be characterized using a two player game: The verifier wants to show that the formula evaluates to true, whereas the falsifier wants to show that it does not. For each quantifier, one of the players chooses an action: For an existential quantifier, the verifier chooses which element to take (because he needs to prove that there is a choice satisfying the formula following after the quantifier). For a universal quantifier, the falsifier chooses which element to take (because he needs to prove that there is a choice falsifying the formula following after the quantifier). When all quantifiers have been addressed, it is checked whether the quantifier-free part of the formula is true or false. If it is true, the verifier wins. Else, the falsifier wins. Now the formula is fulfilled by a given model, iff there is a winning strategy (for the verifier).

\begin{definition}
A function $f$ is in \WinFOarb, if there are a vocabulary $\tau$ disjoint from $\tString$, a sequence of interpretations $I = (I_n)_{n \in \mathbb{N}}$ for $\tau$ and a first-order sentence $\varphi$ in prenex normal form over $\tString \cup \tau$ \stfa $w \in \{0,1\}^*$,
$f(w)$ equals the number of winning strategies for the verifier in the game for $\mathcal{A}_w \vDash_I \varphi$.
\end{definition}

The correspondence between Skolem functions and winning strategies has been observed in far more general context, see, e.g., \cite{Gradel13}. In our case, this means that
$$\SkolemFOarb=\WinFOarb.$$

Analogously we define the uniform version (which we only state using the notion of the model checking games):

\begin{definition}
A function $f$ is in $\WinFOPT$, if there is a first-order sentence $\varphi$ in prenex normal form over $\tString \cup (\PLUS,\TIMES)$ \stfa $w \in \{0,1\}^*$,
$f(w)$ equals the number of winning strategies for the verifier in the game for $\mathcal{A}_w \vDash_I \varphi$, where $I$ interprets $\PLUS$ and $\TIMES$ in the intended way.
\end{definition}

We will use $\CWin(\varphi, \mathcal{A}, I)$ ($\CWin(\varphi, \mathcal{A})$, resp.) to denote the number of winning strategies for $\varphi$ evaluated on the structure $\mathcal{A}$ and the interpretation $I$ (the structure $\mathcal{A}$ and the intended interpretation of $\PLUS$ and $\TIMES$, resp.). In the previous two definitions we could again have replaced $\PLUS$ and $\TIMES$ by $\BIT$.

In the main result of this paper, we will show that the thus defined logical counting classes equal the previously defined counting classes for constant-depth circuits.

\section{A Model-Theoretic Characterization of \NumAC}
\label{sect:non-uniform}

We first note that there is a sort of closed formula for the number of winning strategies of \FO-formulae on given input structures:

\begin{lemma}\label{lemWinCount}
Let $\tau_1, \tau_2$ be disjoint vocabularies and $I$ an interpretation of $\tau_2$. Let $\varphi$ be an FO-formula in prenex normal form over the vocabulary $\tau_1 \cup \tau_2$
of the form
\[\varphi = Q_1 x_1 \dots Q_n x_n \psi,\]
where $Q_i \in \{\exists, \forall\}$ and $\psi$ is quantifier-free.\\
Let $\mathcal{A}$ be a $\tau_1$-structure and $I$ a sequence of interpretations for the symbols in $\tau_2$%
.
Then the number of winning strategies for $\mathcal{A} \vDash_I \varphi$ 
is the following:
\[\CWin(\varphi, \mathcal{A}, I) = \mathop{\Delta_1}_{a_1 \in |\mathcal{A}|} \mathop{\Delta_2}_{a_2 \in |\mathcal{A}|} \cdots \mathop{\Delta_n}_{a_n \in |\mathcal{A}|}
\begin{cases}
1 & \textrm{, if } \mathcal{A} \vDash_I \varphi(a_1, \dots, a_n)\\
0 & \textrm{, else},
\end{cases}\]
where
$\Delta_i$ is a sum if $Q_i = \exists$ and a product otherwise.
\end{lemma}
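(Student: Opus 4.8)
The plan is to induct on the number $n$ of quantifiers, peeling off the outermost one at each step. To make the induction self-supporting one phrases it for formulas in which a prefix of the leading variables has already been instantiated (equivalently, for $\tau_1$-structures together with a partial assignment to those variables); this is harmless, since both the model-checking game and the iterated $\Delta$-expression are defined pointwise in those parameters. For the base case $n=0$ the formula is the quantifier-free $\psi$, the game has no moves, and its single empty play is won by the verifier exactly when $\mathcal{A}\vDash_I\psi$; hence $\CWin(\psi,\mathcal{A},I)\in\{0,1\}$ equals the indicator, which is the right-hand side with no $\Delta$ operators.

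For the inductive step write $\varphi=Q_1x_1\,\varphi'$ with $\varphi'=Q_2x_2\cdots Q_nx_n\,\psi$, and for $a\in|\mathcal{A}|$ let $\varphi'_a$ denote $\varphi'$ with $x_1$ instantiated by $a$. The crux is a correspondence between winning strategies. If $Q_1=\exists$, the verifier moves first; a strategy records this move as some $a$ and then acts as a strategy in the residual game for $\varphi'_a$, and it is winning iff that residual strategy is. As the recorded first move distinguishes the strategies, they split as a disjoint union over $a$, giving $\CWin(\varphi,\mathcal{A},I)=\sum_{a\in|\mathcal{A}|}\CWin(\varphi'_a,\mathcal{A},I)$. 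If $Q_1=\forall$, the falsifier moves first; a verifier strategy must supply, for every possible falsifier choice $a$, a strategy for the residual game of $\varphi'_a$, and it is winning iff all of these are; such strategies are exactly the elements of the product $\prod_{a\in|\mathcal{A}|}\{\text{winning strategies of }\varphi'_a\}$, giving $\CWin(\varphi,\mathcal{A},I)=\prod_{a\in|\mathcal{A}|}\CWin(\varphi'_a,\mathcal{A},I)$. In either case $\CWin(\varphi,\mathcal{A},I)=\mathop{\Delta_1}_{a_1\in|\mathcal{A}|}\CWin(\varphi'_{a_1},\mathcal{A},I)$ with $\Delta_1$ the prescribed operation; applying the induction hypothesis to the $(n-1)$-quantifier formula $\varphi'_{a_1}$ and combining the resulting equalities yields the stated closed form.

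The only genuinely delicate point is this strategy correspondence, i.e.\ verifying that ``being a winning strategy'' localizes correctly along the first move: a strategy that opens with the verifier move $a$ is winning iff its tail wins the game for $\varphi'_a$, and a strategy reacting to the falsifier's move $a$ is winning iff every branch wins the game for $\varphi'_a$. The degenerate arithmetic then takes care of itself: in the $\forall$-case a single branch admitting no winning strategy forces the product (and, correctly, the total count) to $0$, and the conventions for empty sums and products even make the statement come out right when $|\mathcal{A}|=\emptyset$.
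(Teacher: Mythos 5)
Your proof is correct: the induction on the quantifier prefix, with winning strategies decomposing as a disjoint union over the verifier's first move when $Q_1=\exists$ and as a Cartesian product over the falsifier's moves when $Q_1=\forall$, is exactly the right argument, and it correctly matches the Skolem-function view of strategies (the residual strategy for an instantiated $\varphi'_a$ need not branch on earlier existential choices). The paper in fact states Lemma~\ref{lemWinCount} without any proof, treating it as immediate from the definitions, so your write-up simply supplies the standard argument the authors leave implicit; there is no divergence to report.
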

Since $\CWin(\varphi, \mathcal{A})$ is defined via $\CWin(\varphi, \mathcal{A}, I)$, the result also applies to the uniform setting.

Our main theorem can now be stated as follows:

\begin{theorem}\label{thm:main}
$\NumAC = \WinFOarb$ 
\end{theorem}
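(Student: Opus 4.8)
The plan is to establish the two inclusions $\WinFOarb \subseteq \NumAC$ and $\NumAC \subseteq \WinFOarb$ separately, using Lemma~\ref{lemWinCount} as the bridge in both directions. That lemma rewrites $\CWin(\varphi,\mathcal A_w,I)$ as a nested expression $\Delta_1^{a_1}\cdots\Delta_n^{a_n}$ of sums (for $\exists$-quantifiers) and products (for $\forall$-quantifiers) over the universe, applied to the $0/1$-value of the quantifier-free part; and an expression of exactly this shape is what a constant-depth arithmetic circuit computes, since the number of proof trees of an $\lor$-gate is the sum, and of an $\land$-gate the product, of those of its children. So the whole point is to match this nested-sum/product expression with a proof-tree count.

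For $\WinFOarb \subseteq \NumAC$, let $\varphi = Q_1 x_1 \cdots Q_k x_k\,\psi$, $\tau$ and $I=(I_n)_n$ define $f$. For each prefix $(a_1,\dots,a_j)\in\{0,\dots,n-1\}^j$ I put a gate $g_{(a_1,\dots,a_j)}$ that is a $+$-gate if $Q_{j+1}=\exists$ and a $\times$-gate if $Q_{j+1}=\forall$, with the $n$ gates $g_{(a_1,\dots,a_j,a_{j+1})}$ as children; for a full assignment $(a_1,\dots,a_k)$ I replace $g_{(a_1,\dots,a_k)}$ by a constant-size gadget $G_{\bar a}$ whose number of proof trees equals $[\psi(a_1,\dots,a_k)]$. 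Such a gadget exists because, with $\bar a$ and $n$ fixed, $\psi(\bar a)$ is a fixed Boolean function of the constantly many input bits $w_{a_j}$ that appear as arguments of $S$-atoms (the $\leq$-atoms and the $\tau$-atoms are already determined by $n$ and $I_n$); writing this function in full DNF over the corresponding (possibly negated) input literals gives a depth-$2$ circuit in negation normal form in which, for every $w$, at most one term is satisfied and a satisfied term has exactly one proof tree. By Lemma~\ref{lemWinCount} the root then has $\CWin(\varphi,\mathcal A_w,I)=f(w)$ proof trees; the circuit has size $O(n^k)$ and depth $k+2$, both within the required bounds since $k$ is a constant, and no uniformity is needed.

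For $\NumAC \subseteq \WinFOarb$, given a constant-depth polynomial-size family $(C_n)_n$ computing $f$, first unfold each $C_n$ into a tree and then normalize it to a tree $T_n'$ of one fixed constant depth $d'$ (uniform in $n$) whose root sits at level $1$ and whose level-$i$ gates are $\lor$-gates for odd $i$ and $\land$-gates for even $i$, inserting single-child dummy $\lor$- and $\land$-gates to repair the alternation and to push every leaf down to level $d'$; all these operations preserve the number of proof trees, and $T_n'$ stays polynomial-size because $d'$ is constant. Encode $T_n'$ non-uniformly by a vocabulary $\tau$ whose relations give, for each level, the ``child of'' relation and, at the leaf level, whether a leaf is a positive or negative literal and which input position it queries, reserving one component of a leaf's tuple-encoding to be that position so that ``this leaf is a true literal'' is a quantifier-free formula. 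Now take $\varphi = Q_1\bar x_1 \cdots Q_{d'-1}\bar x_{d'-1}\,\psi$ with $Q_i=\exists$ for $i$ odd and $\forall$ for $i$ even (matching the $\lor/\land$-pattern), $\bar x_i$ intended to name the level-$(i{+}1)$ node chosen at level $i$, and let $\psi$ be the disjunction of the ``full chain'' clause ($\bar x_1$ a child of the root, each $\bar x_i$ a child of $\bar x_{i-1}$, and $\bar x_{d'-1}$ a true literal) with, for each even $i$, an ``off-tree at $i$'' clause ($\bar x_1,\dots,\bar x_{i-1}$ forming a valid chain, $\bar x_i$ \emph{not} a child of $\bar x_{i-1}$, and $\bar x_j$ equal to a fixed canonical element for every odd $j>i$). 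These disjuncts are mutually exclusive, so $[\psi]$ is the sum of their indicators, and by Lemma~\ref{lemWinCount} $\CWin(\varphi,\mathcal A_w,I)=\sum_{\bar x_1}\prod_{\bar x_2}\sum_{\bar x_3}\cdots[\psi]$; a straightforward induction on the levels (from the leaves upward) shows this equals the proof-tree count of $T_n'$, hence $f(w)$.

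The main obstacle is exactly the last point: a single formula $\varphi$ has a rigid quantifier prefix, while the gates of $C_n$ are a mixture of conjunctions and disjunctions, and, worse, each quantifier ranges over the \emph{entire} universe rather than over the children of the current gate. Normalizing to strictly alternating trees of uniform depth handles the prefix; the spurious elements are handled by the design of $\psi$. At an $\exists$-level a non-child value of $\bar x_i$ falsifies every disjunct and so contributes $0$ to the sum (harmless); at a $\forall$-level a non-child value of $\bar x_i$ triggers precisely the ``off-tree at $i$'' clause, whose canonical-element constraints force the remaining nested expression to collapse to $1$ — which is exactly the factor needed so that the product over the whole universe reduces to the product over the actual children (using that a sum of a single $1$ and a product of $1$'s are both $1$). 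Getting these clauses right is the delicate part; the DNF gadget, the routine normalization steps, and the empty-input corner case are all straightforward. Combining with the identity $\SkolemFOarb=\WinFOarb$ recalled above then yields $\NumAC=\SkolemFOarb=\WinFOarb$.
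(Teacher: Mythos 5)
Your proposal is correct and follows essentially the same route as the paper: one direction builds the circuit whose gate tree mirrors the quantifier prefix and ends in a full-DNF (minterm) gadget with at most one proof tree, and the other normalizes the circuit to an alternating tree and writes a prenex formula whose quantifier-free part is a disjunction of a ``valid chain'' clause with mutually exclusive repair clauses that pin subsequent variables so that off-tree choices at universal positions contribute the multiplicative identity. The only deviations are cosmetic (which gate type sits at the root, folding the input-position of a leaf into its tuple name rather than using an extra one-witness quantifier, and pinning only the existential rather than all subsequent variables in the repair clauses), and all of them work.
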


The rest of this section is devoted to a proof of this theorem.

\begin{proof}
\underline{$\subseteq$}: Let $f$ be a function in $\NumAC$ and $\mathcal{C} = (C_n)_{n \in \mathbb{N}}$ an $\AC$ circuit family witnessing this. Assume that all $C_n$ are already trees and all leaves have the same depth (the latter can be achieved easily by adding and-gates with only one input). Also, we can assume that all $C_n$ use and- and or-gates alternating beginning with an and-gate in the root. This can be achieved by doubling the depth of the circuit and replacing each layer of the old circuit by an and-gate followed by an or-gate. In Figure \ref{fig:Alternate}, this replacement is illustrated.
\begin{figure}
\capstart
\begin{center}
\scalebox{.6}{
\begin{tikzpicture}

\node[circle, draw] (p) {$p$};
\draw (p.south) -- ($(p.south) - (0,1)$);

\node [circle, draw] (current)[below= 1 of p, anchor=north] {$\land$};
\draw (current.south west) -- ($(current.south west) - (1.5,1)$);
\draw (current.south east) -- ($(current.south east) + (1.5,-1)$);

\node[circle, draw] (c1)[below left= 1 and 1.5 of current] {$c_1$};
\node[circle, draw] (cn)[below right= 1 and 1.5 of current] {$c_n$};
\node[below=0.5 of current] {$\cdots$};

\node[right=3 of current, scale=2.5] (leads){$\leadsto$};

\node [circle, draw] (currentNew)[right=3 of leads] {$\land$};
\draw (currentNew.north) -- ($(currentNew.north) + (0,1)$);

\node[circle, draw] (pNew)[above= 1 of currentNew] {$p$};

\draw (currentNew.south west) -- ($(currentNew.south west) - (1.5,1)$);
\draw (currentNew.south east) -- ($(currentNew.south east) + (1.5,-1)$);
\node [circle, draw=red] (or1)[below left= 1 and 1.5 of currentNew, color=red] {$\lor$};
\node [circle, draw=red] (orn)[below right= 1 and 1.5 of currentNew, color=red] {$\lor$};
\node[below=0.5 of currentNew] {$\cdots$};

\draw[color=red] (or1) -- ($(or1.south) - (0,1)$);
\draw[color=red] (orn) -- ($(orn.south) - (0,1)$);
\node[circle, draw] (c1New)[below= 1 of or1] {$c_1$};
\node[circle, draw] (cnNew)[below= 1 of orn] {$c_n$};

\node (AND)[left= 3 of current, scale=1.5] {$\land$-gate:};

\node[circle, draw] (p)[below= 7 of p] {$p$};
\draw (p.south) -- ($(p.south) - (0,1)$);

\node [circle, draw] (current)[below= 1 of p, anchor=north] {$\lor$};
\draw (current.south west) -- ($(current.south west) - (1.5,1)$);
\draw (current.south east) -- ($(current.south east) + (1.5,-1)$);

\node[circle, draw] (c1)[below left= 1 and 1.5 of current] {$c_1$};
\node[circle, draw] (cn)[below right= 1 and 1.5 of current] {$c_n$};
\node[below=0.5 of current] {$\cdots$};

\node[right=3 of current, scale=2.5] (leads){$\leadsto$};

\node [circle, draw] (currentNew)[right=3 of leads] {$\lor$};

\draw[color=red] (currentNew.north) -- ($(currentNew.north) + (0,1)$);
\node[circle, draw=red] (and)[above= 1 of currentNew, color=red] {$\land$};

\draw (and.north) -- ($(and.north) + (0,1)$);
\node[circle, draw] (pNew)[above= 1 of and] {$p$};

\draw (currentNew.south west) -- ($(currentNew.south west) - (1.5,1)$);
\draw (currentNew.south east) -- ($(currentNew.south east) + (1.5,-1)$);
\node[circle, draw] (c1New)[below left= 1 and 1.5 of currentNew] {$c_1$};
\node[circle, draw] (cnNew)[below right= 1 and 1.5 of currentNew] {$c_n$};
\node[below=0.5 of currentNew] {$\cdots$};

\node (OR)[left= 3 of current, scale=1.5] {$\lor$-gate:};

\end{tikzpicture}
}
\caption{Construction of an Alternating Circuit}
\label{fig:Alternate}
\end{center}
\end{figure}
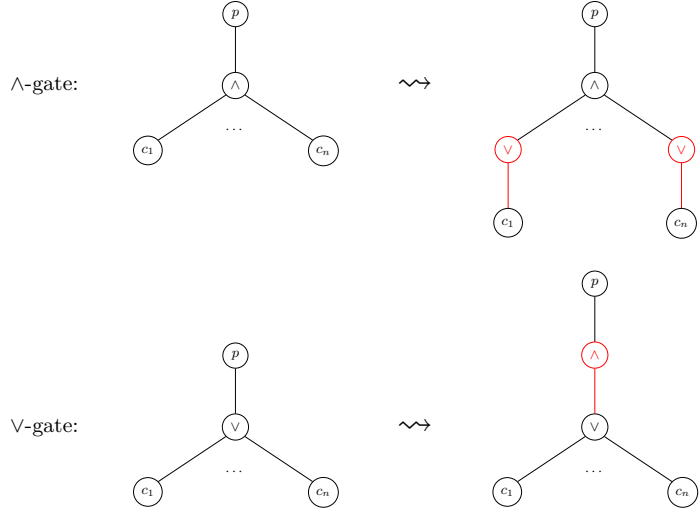


Let $w \in \{0,1\}^*$ be an input, $r$ be the root of $C_{|w|}$ and $k$ the depth of $C_n$ for all $n$. The value $f(w)$ can be given as follows:
\[\label{eqn:prooftrees}f(w) = \quad \mathclap{\prod_{\substack{y_1 \textrm{ is a}\\\textrm{child of } r}}} \quad \quad \sum_{\substack{y_2 \textrm{ child}\\\textrm{of } y_1}} \!\! \cdots \!\! \mathop{\bigcirc}_{\substack{y_k \textrm{ child}\\\textrm{of } y_{k-1}}}
\begin{cases}
1 & \textrm{, if } y_k \textrm{ associated with a true literal}\\
0 & \textrm{, else},
\end{cases}\tag{$\star$}\]
where $\bigcirc$ is a product if $k$ is odd and a sum otherwise.

We will now build an FO-sentence $\varphi$ over $\tString \cup \tCirc$ such that for any input $w \in \{0,1\}^*$, the number of winning strategies to verify $\mathcal{A}_w \vDash_\mathcal{C} \varphi$ equals the number of proof trees of the circuit $C_{|w|}$ on input $w$. Note that the circuit family $\mathcal{C}$ as a family of \tCirc-structures can directly be used as the non-uniform family of interpretations for the evaluation of $\varphi$. Since only one universe is used for evaluation and it is determined by the input structure $\mathcal{A}_w$,  the gates in this \tCirc-structure have to be tuples of variables ranging over the universe of $\mathcal{A}_w$. To simplify the presentation, we assume in the following that we do not need tuples---a single element of the universe already corresponds to a gate. The proof can be generalized to the case where this assumption is dropped.

Before giving the desired sentence $\varphi$ over $\tString \cup \tCirc$, we define
\[\varphi_\textrm{trueLiteral}(x) \defeq \exists i \ \big(\textrm{Input}(x, i) \land S(i) \lor \textrm{negatedInput}(x, i) \land \neg S(i)\big).\]
Now $\varphi$ can be given as follows:
\begin{align*}
\varphi \defeq & \exists y_0 \forall y_1 \exists y_2 \dots Q_k y_k\\
& r(y_0) \land
\left(\vphantom{\bigvee_{\substack{1 \leq i \leq k,\\i \textrm{ odd}}}}\left( \left( \bigwedge_{1 \leq i \leq k}{E(y_i, y_{i-1})}\right) \land 
\varphi_\textrm{trueLiteral}(y_k)\right)\right.\\
& \left. \lor \bigvee_{\substack{1 \leq i \leq k,\\i \textrm{ odd}}}
{\left(\bigwedge_{1 \leq j < i}{(E(y_j, y_{j-1}))} \land \neg E(y_i, y_{i-1}) \land \bigwedge_{i < j \leq k}{r(y_j)}\right)}\right),
\end{align*}
where $Q_k$ is an existential quantifier, if $k$ is odd and a universal quantifier otherwise.

The big disjunction ensures that the counted value when making wrong choices (choosing an element that is either not a gate or not a child of the previous gate) is always the neutral element of the arithmetic operation associated with the respective quantifier: For existential quantifiers, we sum over all possiblities. Thus, having counted value 0 for each wrong choice is fine. For universal quantifiers, we multiply the number of winning strategies for all choices, though. In this case, we need to get value 1 for each wrong choice.

We now need to show that the number of winning strategies for $\mathcal{A}_w \vDash_\mathcal{C} \varphi$ is equal to the number of proof trees of the circuit $C_{|w|}$ on input $w$. For this, let 
\begin{align*}
\varphi^{(n)}(y_1, \dots, y_n) \defeq & Q_{n+1} y_{n+1} \dots Q_k y_k\\
& \left( \bigwedge_{1 \leq i \leq k}{\big(E(y_i, y_{i-1})\big)} \land
\varphi_\textrm{trueLiteral}(y_k)\right) \lor\\
& \bigvee_{\substack{n+1 \leq i \leq k,\\i \textrm{ odd}}}
\left(\bigwedge_{1 \leq j < i}{\big(E(y_j, y_{j-1})\big)} \land \neg E(y_i, y_{i-1}) \land \right.\\
& \phantom{\bigvee_{\substack{n+1 \leq i \leq k,\\i \textrm{ odd}}}
\left( \vphantom{\bigwedge_{1 \leq j < i}} \right. } \left. \bigwedge_{i < j \leq k}{r(y_j)}\right),
\end{align*}
where $Q_{n+1}, \dots, Q_{k-1}$ are the quantifiers preceding $Q_k$.
Note that the start of the index $i$ on the big ``or'' changed compared to $\varphi$. In the following we will use the abbreviation
\[\#w(\psi) = \CWin(\psi, \mathcal{A}_w, \mathcal{C}).\]
We now show by induction that:
\[\#w(\varphi) = \prod_{\substack{y_1 \textrm{ is a}\\\textrm{child of } r}}\quad \sum_{\substack{y_2 \textrm{ is a}\\\textrm{child of } y_1}} \dots \mathop{\bigcirc}\limits_{\substack{y_n \textrm{ is a}\\\textrm{child of } y_{n-1}}}
\#w(\varphi^{(n)}[y_0/r]).\]
Here, $r$ is notation for the root of the circuit, although we formally do not use constants. The replacement of $y_0$ by $r$ is only done for simplicity.\\
Induction basis ($n=0$): The induction hypothesis here simply states
\[\#w(\varphi) = \#w(\varphi^{(0)}[y_0/r]),\]
which holds by definition.\\
Induction step ($n \rightarrow n+1$): We can directly use the induction hypothesis here:
\[\#w(\varphi) = \prod_{\substack{y_1 \textrm{ is a}\\\textrm{child of } r}}\quad \sum_{\substack{y_2 \textrm{ is a}\\\textrm{child of } y_1}} \dots \mathop{\bigcirc}\limits_{\substack{y_n \textrm{ is a}\\\textrm{child of } y_{n-1}}}
\#w(\varphi^{(n)}[y_0/r]),\]
so it remains to show that
\[\#w(\varphi^{(n)}[y_0/r]) = \mathop{\bigcirc}\limits_{\substack{y_{n+1} \textrm{ is a}\\\textrm{child of } y_n}} \#w(\varphi^{(n+1)}[y_0/r])\]
We distinguish two cases: Depending on whether $n+1$ is even or odd, the $(n+1)$-st quantifier is either an existential or a universal quantifier. In the same way all gates of that depth in the circuits from $\mathcal{C}$ are either or- or and-gates.

\textit{Case 1}: $n+1$ is odd, so the $(n+1)$-st quantifier is a universal quantifier. Thus, from $\#w(\varphi^{(n)}[y_0/r])$ we get a $\prod$-operator, which is the same we get for an and-gate in the corresponding circuit. We now need to check over which values of $y_{n+1}$ the product runs:\\
The big conjunction may only be true if $y_{n+1}$ is a child of $y_n$.\\
The big disjunction may become true for values of $y_{n+1}$ which are no children of $y_n$ only if all variables quantified after $y_{n+1}$ are set to $r$ (the choice of $r$ here is arbitrary and was only made because $r$ is the only constant in the vocabulary). This has the purpose to make the counted value 1 in this case. Also, the disjunct for $i=n$ can only be made true if $y_{n+1}$ is not a child of $y_n$, so we can drop it if $y_{n+1}$ is a child of $y_n$. Since for all values of $y_{n+1}$ that are not children of $y_n$ we fix all variables quantified afterwards, we get:
\begin{align*}
\prod_{y_{n+1} \in |\mathcal{A}_w|} \#w(\varphi^{(n+1)}[y_0/r]) & =
\quad \prod_{\mathclap{\substack{y_{n+1} \in |\mathcal{A}_w|,\\y_{n+1} \textrm{is a child of } y_n}}} \#w(\varphi^{(n+1)}[y_0/r]) \quad \cdot \quad
\prod_{\mathclap{\substack{y_{n+1} \in |\mathcal{A}_w|,\\ y_{n+1} \textrm{is not a child of } y_n}}} 1\\
& = \quad \prod_{\mathclap{\substack{y_{n+1} \in |\mathcal{A}_w|,\\y_{n+1} \textrm{is a child of } y_n}}} \#w(\varphi^{(n+1)}[y_0/r]).
\end{align*}
Thus, we get a product only over the children of $y_n$ and can drop the disjunct for $i=n$ from the formula for the next step.

\textit{Case 2}: $n+1$ is even, so the $(n+1)$-st quantifier is an existential quantifier. Therefore, we get a $\sum$-operator from $\#w(\varphi^{(n)}[y_0/r])$, which is the same we get for an or-gate in the corresponding circuit. We now need to check over which values of $y_{n+1}$ the sum runs:\\
The big conjunction can only be true if $y_{n+1}$ is a child of $y_n$.\\
The big disjunction can also only be true if $y_{n+1}$ is a child of $y_n$.\\
This means that the counted value, when choosing gates that are not children of the current gate, will be 0. Thus, we directly get the sum
\[\sum_{\substack{y_{n+1} \in |\mathcal{A}_w|,\\y_{n+1} \textrm{ is a child of } y_n}} \#w(\varphi^{(n+1)}[y_0/r]).\]
Here, $\varphi^{(n+1)}$ does not drop a disjunct. This concludes the induction.\\
For $\varphi^{(k)} = \varphi_\textrm{trueLiteral}(y_k)$, we get
\[\#w(\varphi^{(k)}) = \CWin(\varphi^{(k)}, \mathcal{A}_w, \mathcal{C}),\]
which is 1 if $y_k$ is associated with a true literal in $C_{|w|}$ on input $w$ and 0, otherwise. This is due to the fact that the quantifier in $\varphi_\textrm{trueLiteral}$ can only be made true by exactly one choice of $i$. Together with equation (\ref{eqn:prooftrees}) on page \pageref{eqn:prooftrees} this shows $f(w) = \#w(\varphi)$.

\underline{$\supseteq$}: Let $f$ be a function in \WinFOarb. Let $\tau$ disjoint from $\tString$ be a vocabulary and the formula $\varphi$ over $\tau \cup \tString$ together with the non-uniform family  $I = (I_n)_{n \in \mathbb{N}}$ of interpretations of the relation symbols in $\tau$ a witness for $f \in \WinFOarb$. Let $k$ be the length of the quantifier prefix of $\varphi$. We now describe how to construct the circuits $C_n$ within a circuit family $\mathcal{C} = (C_n)_{n \in \mathbb{N}}$ that shows $f \in \NumAC$.

$C_n$ consist of two parts. The first part mimics the quantifiers of the formula, while the second part is of constant size and evaluates the quantifier-free part of the formula.

The first part is built analogously to the circuit in Immerman's proof of $\FO[\PLUS, \TIMES] \subseteq \FO[\PLUS, \TIMES]\textrm{-uniform } \AC$ \cite{ImmermanBuch}. The gates are of the form $(a_1, \dots, a_i)$ with $1 \leq i \leq k$ and $a_j \in \{0, \dots, n-1\}$ \fa $j$. Each such gate has the meaning that we set the first $i$ quantified variables to the values $a_1, \dots, a_i$. Therefore, for the $i$-th quantifier of $\varphi$ and for any choice of $a_1, \dots, a_{i-1}$, $(a_1, \dots, a_{i-1})$ is an and-gate if the quantifier was $\forall$ and an or-gate if the quantifier was $\exists$. Also, if $i \leq k$ we add as children to each such gate $(a_1, \dots, a_{i-1}, a_i)$ \fa $a_i \in |\mathcal{A}_w|$. This means that the name of each gate in depth $k$ contains information about all choices made for the quantified variables.

The second part, which evaluates the quantifier-free part of $\varphi$ based on the choices made for the quantified variables works as follows: First, transform the quantifier-free part of $\varphi$ into disjunctive normal form. Then expand the clauses to Minterms. For this, think of the formula as a propositional formula which has all occuring atoms as its variables. This means that after the transformation we have an \FO-formula $\varphi'$ in disjunctive normal form, in which all occuring atoms also occur in each clause. The result is that for each assignment either no clause of the formula is satisfied or exactly one clause is satisfied. We can now directly build a circuit computing this disjunctive normal form except for the truth values of the atoms. From the above we get that this circuit is satisfied by an input if and only if it has exactly one proof tree with that input. This means that when counting proof trees, this second part will only determine the truth value of the quantifier-free part and not give a value above 1. The truth value of the atoms can be determined in the circuit as follows: For predicates of the form $S(x)$ use an input gate associated with the $x$-th input bit. All other predicates are numerical predicates and thus only dependent on $n$. Therefore, they can be replaced by constants in $C_n$.

%
Now by Lemma \ref{lemWinCount} it is clear that counting proof trees on this circuit family leads to the same function as counting winning strategies of the verifier for $\mathcal{A}_w \vDash_I \varphi$. 
\end{proof}

\section{The Uniform Case}
\label{sect:uniform}

Next we want to transfer this result to the uniform setting. In the direction from right to left we will have to show that the constructed circuit is uniform, which is straightforward. On the other hand, the following important point changes in the direction from left to right: We have to actually replace  queries to $C_{|w|}$ in the FO-sentence by the corresponding FO-formulae we get from the FO-interpretation which shows uniformity of $\mathcal{C}$. Since we introduce new quantifiers by this, we have to show how we can keep the counted value the same.
That this is possible follows from the following lemma, which can also be used to prove that $\WinFOPT$ is closed under $\FO[\PLUS, \TIMES]$-reductions (exact definition follows).

\begin{lemma}\label{lemAppFOint}
Let $\varphi$ be an $\FO[\PLUS,\TIMES]$-formula over some vocabulary $\tau$, and let $I\colon \struc[\sigma] \rightarrow \struc[\tau]$ be an $\FO[\PLUS,\TIMES]$-interpretation. Then there is an $\FO[\PLUS,\TIMES]$-formula $\varphi'$ over $\sigma$ \stfa $\mathcal{A} \in \struc[\sigma]$,
\[\CWin(\varphi', \mathcal{A}) = \CWin(\varphi, I(\mathcal{A})).\]
\end{lemma}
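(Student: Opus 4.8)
The plan is to rewrite $\varphi$ by substituting, for every atomic subformula $R_i(\overline{u}_1,\dots,\overline{u}_{a_i})$ occurring in $\varphi$, the defining formula $\varphi_i$ of the interpretation $I$, after first relativizing all quantifiers of $\varphi$ to the universe formula $\varphi_0$, and finally pulling the whole thing into prenex normal form. The central issue is that the relativization and the substitution both introduce \emph{new} quantifiers, and we must make sure these do not change the number of winning strategies. So the real content of the proof is a bookkeeping argument about which quantifiers are ``genuine'' (i.e.\ correspond to a choice in the model-checking game) and which are ``bounded'' (i.e.\ have a unique admissible witness, hence contribute a neutral factor $0$ or $1$ to the product/sum formula of Lemma~\ref{lemWinCount}).

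Concretely, I would proceed in three steps. \textbf{Step 1 (relativization).} Since each element of $I(\mathcal{A})$ is a $k$-tuple over $|\mathcal{A}|$ satisfying $\varphi_0$, replace each variable $x$ of $\varphi$ by a block $\overline{x}=(x^1,\dots,x^k)$ of $\sigma$-variables; an existential quantifier $\exists x\,\psi$ becomes $\exists x^1\cdots\exists x^k\,(\varphi_0(\overline{x})\land\psi^{\mathrm{rel}})$ and a universal quantifier $\forall x\,\psi$ becomes $\forall x^1\cdots\forall x^k\,(\varphi_0(\overline{x})\to\psi^{\mathrm{rel}})$. \textbf{Step 2 (substitution of atoms).} Replace each atom $R_i(\overline{x}_1,\dots,\overline{x}_{a_i})$ by $\varphi_i(\overline{x}_1,\dots,\overline{x}_{a_i})$, and each equality $x=y$ between $I(\mathcal{A})$-elements by componentwise equality $\bigwedge_{j=1}^k x^j=y^j$; here one must also handle the $\PLUS$, $\TIMES$ atoms of $\varphi$ — but by the hypothesis these refer to the arithmetic \emph{on $I(\mathcal{A})$}, which by standard arguments (as in Immerman's treatment of $\FO$-interpretations, cf.~\cite{ImmermanBuch}) is again $\FO[\PLUS,\TIMES]$-definable over $\sigma$, so they too get replaced by $\sigma$-formulas. \textbf{Step 3 (prenexing and counting).} Finally move all quantifiers — the original ones, the $k$ copies introduced per original quantifier in Step~1, and those inside the $\varphi_i$, $\varphi_0$ introduced in Step~2 — to the front. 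Now invoke Lemma~\ref{lemWinCount}: $\CWin(\varphi',\mathcal{A})$ is the $\Delta$-expression over all these quantified variables. I claim it collapses to $\CWin(\varphi,I(\mathcal{A}))$. For the quantifier blocks of Step~1: ranging $(x^1,\dots,x^k)$ over $|\mathcal{A}|^k$ and keeping only tuples with $\mathcal{A}\models\varphi_0$, the sum (for $\exists$) counts exactly one summand per element of $|I(\mathcal{A})|$ and kills the rest (value $0$), while the product (for $\forall$) contributes a factor $1$ for every tuple \emph{not} in $|I(\mathcal{A})|$ and the genuine factor for the others — this is precisely the neutral-element mechanism already used in the proof of Theorem~\ref{thm:main}. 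For the quantifiers inside $\varphi_0$ and each $\varphi_i$ of Step~2: these are quantifier-free-equivalent in the counting sense only if they contribute a factor/summand of $0$ or $1$; this is \emph{not} automatic, so here I would first replace each $\varphi_0,\varphi_i$ by an equivalent formula whose satisfying assignments are counted exactly once — e.g.\ by the same disjunctive-normal-form / full-Minterm trick used in the $\supseteq$ direction of Theorem~\ref{thm:main}, or, more simply, by observing that we only need each $\varphi_i$ up to logical equivalence and may therefore take it in a ``unique-witness'' normal form. Once every auxiliary subformula has at most one winning strategy on any assignment, Lemma~\ref{lemWinCount} gives the claimed equality termwise.

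The main obstacle is precisely this last point: the formulas $\varphi_0,\dots,\varphi_r$ defining the interpretation are arbitrary $\FO[\PLUS,\TIMES]$-formulas, so a naive substitution would multiply the winning-strategy count by the (possibly large) number of winning strategies of each $\varphi_i$ rather than by its truth value. The fix — normalizing each $\varphi_i$ so that it has a unique winning strategy whenever it holds and none otherwise — has to be done \emph{before} prenexing, and one has to check it is compatible with the componentwise-equality substitution and with the arithmetic-on-$I(\mathcal{A})$ rewriting; this is the step I expect to require the most care. Everything else is the routine prenex normal form manipulation plus a direct appeal to Lemma~\ref{lemWinCount}. Finally, the remark that $\WinFOPT$ is closed under $\FO[\PLUS,\TIMES]$-reductions follows by taking $\varphi$ to be the defining sentence of a $\WinFOPT$-function and $I$ the reduction: the lemma converts it into a $\WinFOPT$-witness over $\sigma$.
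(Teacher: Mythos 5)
Your proposal matches the paper's proof in all essentials: substitute the interpretation's defining formulae for the $\tau$-atoms, prenex, relativize to the universe formula using the same neutral-element trick for universal quantifiers as in the proof of Theorem~\ref{thm:main}, and normalize the formulae of $I$ so that each has at most one winning strategy (the paper achieves this by rewriting every existential quantifier in $I$'s formulae to select the least witness). The difficulty you single out as needing the most care---that a naive substitution would multiply the count by the number of winning strategies of each $\varphi_i$---is precisely the one the paper resolves with that unique-witness normalization, so your plan is sound and essentially identical to the published argument.
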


\begin{proof}
The idea is to plug in the formulae from $I$ into the formula $\varphi$, replacing all occurrences of symbols from $\tau$. Let $\varphi_\textrm{uni}$ be the formula in $I$, which checks membership in the universe of the structure we map to. To simplify the presentation, we assume in the following that we do not need tuples---elements of the universe of $I(\mathcal{A})$ can be encoded as elements of the universe of $\mathcal{A}$. \WLOG we can assume that all existential quantifiers in formulae in $I$ can only be satisfied by at most one witness. If this is not the case, we can replace $\exists$-quantifiers (starting from the outermost) in the following way (forming a NNF in every step):
\[\exists z \psi(z) \leadsto \exists z (\psi(z) \land \forall z_2 (z_2 < z \rightarrow \neg \psi(z_2))),\]
with a fresh variable $z_2$.\\
We now get the desired formula $\varphi'$ as follows:
\begin{enumerate}
  \item Replace every occurrence of a relation symbol from $\tau$ by the corresponding FO-formula from $I$.
  \item Transform the formula into prenex normal form by shifting all newly introduced quantifiers directly behind the old quantifier prefix (renaming variables where neccessary).
  \item Make sure that only tuples that satisfy $\varphi_\textrm{uni}$ are counted (see below).
\end{enumerate}
Step 3 needs a bit of further explanation. Let $\psi$ be the formula after step 2:
\[\psi \defeq Q_1 x_1 \dots Q_k x_k Q_1 y_1 \dots Q_\ell y_\ell \ \psi',\]
where $\psi'$ is quantifier-free. By the assumption above, each $x_i$ corresponds to a variable in $\varphi$. The variables $y_i$ are the variables newly introduced by the formulae from $I$. We now make sure that we count only over tuples from the universe. For this, we start by defining
\[\psi'' \defeq \psi \land \bigwedge_{1\leq i \leq k}{\varphi_\textrm{uni}(x_i)}.\]
The formula $\varphi''$ can only be made true when choosing values for $x_i$ that correspond to elements of $|I(\mathcal{A})|$. When counting winning strategies, there remains a problem, though: The quantifier-free part will be false, whenever values not corresponding to elements of $|I(\mathcal{A})|$ are chosen. This is fine for existential quantifiers, because 0 is the neutral element of addition, but it does not work for universal quantifiers. Similar to the respective part of the proof of Theorem \ref{thm:main}, we need to fix this. We do this by building the formula
\[\varphi' \defeq \psi'' \lor \bigvee_{\substack{1 \leq i < k\\Q_i = \forall}} \left( \bigwedge_{1 \leq j < i} \varphi_\textrm{uni}(x_j) \land \neg \varphi_\textrm{uni}(x_i) \land \bigwedge_{i < j \leq k} \forall z (x_j < z \lor x_j = z) \right).\]
The new part allows universally quantified variables to take on values that do not corrspond to elements of $|I(\mathcal{A})|$, but if they do, all variables quantified afterwardrs are fixed to take a specific value. Thus, for every such choice of a universally quantified variable, the number of winning strategies for the rest of the game is always exactly 1, the neutral element of multiplication (cf. proof of Theorem \ref{thm:main}).

Now let $\varphi''$ be the prenex normal form of $\varphi'$. Since the elements outside the universe are now handled, we get by definition of \FO-interpretations (and thus $I$):\\
The quantifier-free part of $\varphi$ is true for an assignment to the quantified variables if and only if the part of $\varphi''$ after quantification of the variables $x_i$ is true for the corresponding assignment to these variables.\\
Since the existential quantifiers occuring afterwards can have at most one witness, the number of winning strategies for this part can only be 0 or 1. Thus, they can be viewed as normal quantifiers (only determining a truth value) instead of quantifiers for which we count assignments to the quantified variables. This proves
\[\CWin(\varphi', \mathcal{A}) = \CWin(\varphi, I(\mathcal{A})).\]
\end{proof}

As already mentioned, this lemma yields an interesting closure property as a corollary, that is, closure under \FO-reductions:

\begin{definition}
Let $f,g: \{0,1\}^* \rightarrow \mathbb{N}$. 
We say that $f$ is (many-one) $\FO[\PLUS, \TIMES]$-reducible to $g$, 
in symbols: $f \redu{fo} g$, 
if there are vocabularies $\sigma, \tau$ and an $\FO[\PLUS,\TIMES]$-interpretation $I: \struc[\sigma] \rightarrow \struc[\tau]$ \stfa $\mathcal{A} \in \struc[\sigma]$:
\[f(\textrm{enc}_\sigma(\mathcal{A})) = g(\textrm{enc}_\tau(I(\mathcal{A}))).\]
\end{definition}

\begin{corollary}\label{corClosRed}
On ordered structures with $\PLUS$ and $\TIMES$, $\WinFO[\PLUS, \TIMES]$ is closed under $\FO[\PLUS, \TIMES]$-reductions, that is: Let $f,g$ be functions with $g \in \WinFO[\PLUS, \TIMES]$ and $f \redu{fo} g$. Then $f \in \WinFO[\PLUS, \TIMES]$.
\end{corollary}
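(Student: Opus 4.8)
The plan is to unfold the definitions and reduce everything to a single application of Lemma~\ref{lemAppFOint}. So suppose $g \in \WinFO[\PLUS,\TIMES]$ and $f \redu{fo} g$. By definition of $\WinFO[\PLUS,\TIMES]$, there is a first-order sentence $\varphi$ in prenex normal form over $\tString \cup (\PLUS,\TIMES)$ such that for every input structure $\mathcal{A}_v$ we have $g(v) = \CWin(\varphi, \mathcal{A}_v)$. By definition of $\redu{fo}$, there are vocabularies $\sigma, \tau$ and an $\FO[\PLUS,\TIMES]$-interpretation $I\colon \struc[\sigma] \to \struc[\tau]$ such that $f(\textrm{enc}_\sigma(\mathcal{A})) = g(\textrm{enc}_\tau(I(\mathcal{A})))$ for all $\mathcal{A} \in \struc[\sigma]$. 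Here one has to be a little careful that the vocabulary $\tau$ over which $g$ is presented matches (or can be massaged to match) the vocabulary $\tString \cup (\PLUS,\TIMES)$ expected by $\varphi$; since we work on ordered structures with $\PLUS$ and $\TIMES$ built in, the string predicate $S$ together with the arithmetic relations is all that $\varphi$ needs, and the encoding $\textrm{enc}_\tau(I(\mathcal{A}))$ is exactly the word model on which $\varphi$ is evaluated, so $g(\textrm{enc}_\tau(I(\mathcal{A}))) = \CWin(\varphi, I(\mathcal{A}))$.

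The second step is to invoke Lemma~\ref{lemAppFOint} directly: applied to the formula $\varphi$ and the interpretation $I$, it yields an $\FO[\PLUS,\TIMES]$-formula $\varphi'$ over $\sigma$ such that $\CWin(\varphi', \mathcal{A}) = \CWin(\varphi, I(\mathcal{A}))$ for every $\mathcal{A} \in \struc[\sigma]$. The lemma already takes care of the delicate issue of the new quantifiers introduced when the formulae of $I$ are plugged in — making their winning-strategy count trivial (at most one witness for the new existentials, and the disjunctive patch so that wrong choices at new universal positions contribute the multiplicative unit $1$). One small check is that the lemma is stated for a formula $\varphi$ over an arbitrary vocabulary $\tau$, and in our setting $\tau$ must include the built-in $\PLUS$ and $\TIMES$ that $I$ maps into; since $I$ is an $\FO[\PLUS,\TIMES]$-interpretation it supplies defining formulae for the arithmetic relations of the target as well, so the hypotheses of the lemma are met verbatim.

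Chaining the two displayed equalities gives, for every $\mathcal{A} \in \struc[\sigma]$,
\[
f(\textrm{enc}_\sigma(\mathcal{A})) = g(\textrm{enc}_\tau(I(\mathcal{A}))) = \CWin(\varphi, I(\mathcal{A})) = \CWin(\varphi', \mathcal{A}),
\]
so the single formula $\varphi'$ over $\sigma$ (together with the intended interpretation of $\PLUS$ and $\TIMES$) witnesses $f \in \WinFO[\PLUS,\TIMES]$, which is what we wanted. The only genuine obstacle is the bookkeeping in going between the ``encoding'' formulation of reductions and the ``word model $\mathcal{A}_w$ over $\tString$'' formulation used in the definition of $\WinFO[\PLUS,\TIMES]$ — i.e.\ verifying that an $\FO[\PLUS,\TIMES]$-interpretation into an arbitrary $\tau$ can be composed with a $\WinFO[\PLUS,\TIMES]$ sentence presented over $\tString$, possibly after a preliminary $\FO[\PLUS,\TIMES]$-interpretation reshaping a $\tau$-structure into its string encoding; all of this is routine on ordered structures with built-in arithmetic, and the real content of the corollary is entirely carried by Lemma~\ref{lemAppFOint}.
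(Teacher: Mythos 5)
Your proposal is correct and follows essentially the same route as the paper's own proof: normalize the vocabularies to $\tString \cup (\PLUS,\TIMES)$ (which the built-in order and arithmetic allow), apply Lemma~\ref{lemAppFOint} to the witnessing sentence and the reduction's interpretation, and chain the resulting equalities. The vocabulary-matching concern you flag is exactly what the paper dispatches with its ``w.l.o.g.\ $\sigma = \tau = \tString \cup (\PLUS,\TIMES)$'' step.
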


\begin{proof}
Let $I: \struc[\sigma] \rightarrow \struc[\tau]$ be a witness for $f \redu{fo} g$. Since we have ordered structures with $\PLUS$ and $\TIMES$, we can assume \wLOG that $\sigma = \tau = \tString \cup (\PLUS, \TIMES)$
.\\
Let $\varphi$ over $\tString \cup (\PLUS, \TIMES)$ be a witness for $g \in \WinFO[\PLUS, \TIMES]$, so we have \fa $\mathcal{A}_w \in \struc[\tString]$:
\[\CWin(\varphi, \mathcal{A}_w) = g(w).\]
Now we use Lemma \ref{lemAppFOint} to get $\varphi'$ over vocabulary $\tString \cup (\PLUS, \TIMES)$ with
\[\CWin(\varphi', \mathcal{A}) = \CWin(\varphi, I(\mathcal{A}))\]
and get
\begin{align*}
\CWin(\varphi', \mathcal{A}_w) & = \CWin(\varphi, I(\mathcal{A}_w))\\
& = g(\textrm{enc}_{\tString}(I(\mathcal{A}_w)))\\
& = f(\textrm{enc}_{\tString}(\mathcal{A}_w))\\
& = f(w).
\end{align*}
\end{proof}

\begin{remark}
The result does not hold in this simple form if $\PLUS$ and $\TIMES$ are not part of $\tau$. This is due to the fact, that the numerical predicates of the structure we map to might not be definable. They need to be given within the \FO-interpretation $I$. Alternatively, we could restrict ourselves to \FO-interpretations that have $\varphi_\textrm{universe} \equiv 1$. In this case the numerical predicates are always definable \cite{ImmermanBuch}.
\end{remark}

Using Lemma \ref{lemAppFOint} we can now establish the desired result in the $\FO[\PLUS, \TIMES]$-uniform setting.

\begin{theorem}\label{thm:mainUniform}
$\FO[\PLUS,\TIMES]\textrm{-uniform } \NumAC = \WinFOPT$.
\end{theorem}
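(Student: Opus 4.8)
The plan is to mirror the proof of Theorem~\ref{thm:main}, again treating the two inclusions separately, but now invoking Lemma~\ref{lemAppFOint} for the direction from left to right and adding a uniformity check for the direction from right to left. In both directions the circuit construction / formula construction is literally the one from the non-uniform proof; the new content is bridging between $\tCirc$-structures produced by an $\FO[\PLUS,\TIMES]$-interpretation and $\FO[\PLUS,\TIMES]$-formulae.

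For $\FO[\PLUS,\TIMES]\textrm{-uniform }\NumAC \subseteq \WinFOPT$: start from a function $f$ witnessed by an $\FO[\PLUS,\TIMES]$-uniform circuit family $\mathcal{C}=(C_n)_{n\in\mathbb{N}}$, and let $I\colon\struc[\tString]\rightarrow\struc[\tCirc]$ be the $\FO[\PLUS,\TIMES]$-interpretation witnessing uniformity. I would build the sentence $\varphi$ over $\tString\cup\tCirc$ exactly as in the proof of Theorem~\ref{thm:main}, this time genuinely letting gate names be tuples of element variables (the ``can be generalized'' case of that proof), so that $\CWin(\varphi,I(\mathcal{A}_w))=f(w)$ for every $w$. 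Then apply Lemma~\ref{lemAppFOint} with $\sigma=\tString\cup(\PLUS,\TIMES)$, obtaining an $\FO[\PLUS,\TIMES]$-sentence $\varphi'$ over $\tString\cup(\PLUS,\TIMES)$ with $\CWin(\varphi',\mathcal{A}_w)=\CWin(\varphi,I(\mathcal{A}_w))=f(w)$; since the formula produced by Lemma~\ref{lemAppFOint} is in prenex normal form, this gives $f\in\WinFOPT$. One small point to spell out: in Lemma~\ref{lemAppFOint} the interpretation maps onto the whole vocabulary of $\varphi$, whereas here $\varphi$ also mentions the $\tString$- and arithmetic symbols, which $I$ leaves untouched; the lemma still applies once those symbols are regarded as ``interpreted by themselves'' (equivalently, one folds $\tString$ and $(\PLUS,\TIMES)$ into the target vocabulary and lets the corresponding $\varphi_i$ be the atoms themselves).

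For $\WinFOPT \subseteq \FO[\PLUS,\TIMES]\textrm{-uniform }\NumAC$: take a prenex sentence $\varphi$ over $\tString\cup(\PLUS,\TIMES)$ with $k$ quantifiers witnessing $f$, and construct the circuit family $\mathcal{C}$ precisely as in the $\supseteq$-direction of the proof of Theorem~\ref{thm:main}: a first part whose gates are the tuples $(a_1,\dots,a_i)$ of universe elements recording the choices for the first $i$ quantified variables, with $\land$/$\lor$ alternating according to the (fixed) quantifier pattern of $\varphi$, and a constant-size second part that evaluates the quantifier-free matrix after rewriting it into an ``expanded'' disjunctive normal form in which every atom occurs in every clause, so that the second part has exactly one proof tree when its atoms make the matrix true and none otherwise. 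By Lemma~\ref{lemWinCount}, counting proof trees of this family yields $\CWin(\varphi,\mathcal{A}_w)=f(w)$. It then remains to verify that $\mathcal{C}$ is $\FO[\PLUS,\TIMES]$-uniform, i.e.\ to exhibit an $\FO[\PLUS,\TIMES]$-interpretation $\struc[\tString]\rightarrow\struc[\tCirc]$ whose universe formula carves out the boundedly many tuple-lengths used as gate names, whose edge formula merely tests the prefix relation between such tuples, whose $G_\land,G_\lor$ formulae read off the quantifier pattern of $\varphi$, and whose $\textrm{Input}$, $\textrm{negatedInput}$, $r$ formulae are read off from the fixed expanded DNF of the matrix (the non-$S$ atoms of $\varphi$ are numerical predicates over $(\PLUS,\TIMES)$, hence become constant $0$/$1$ gates whose value is $\FO[\PLUS,\TIMES]$-definable from $n$). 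This is entirely analogous to Immerman's uniform simulation $\FO[\PLUS,\TIMES]\subseteq\FO[\PLUS,\TIMES]\textrm{-uniform }\AC$ on which the construction was modeled, so I expect it to go through routinely.

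The conceptually hard step — showing that substituting $\FO[\PLUS,\TIMES]$-interpretations, which introduces fresh quantifiers, leaves the number of winning strategies unchanged — has already been isolated as Lemma~\ref{lemAppFOint}, whose proof forces the new existential sub-quantifiers to have at most one witness and pads the new universal sub-quantifiers over ``junk'' values so that each contributes the neutral element $1$. Given that lemma, the only remaining friction is bookkeeping in the uniformity check of the $\supseteq$-direction, in particular tracking gates named by tuples across the prefix relation and describing the constant-size expanded-DNF part by $\FO[\PLUS,\TIMES]$-formulae; this I expect to be routine rather than genuinely difficult. A minor secondary task is making the tuple-level version of the sentence $\varphi$ from the proof of Theorem~\ref{thm:main} fully explicit before feeding it to Lemma~\ref{lemAppFOint}.
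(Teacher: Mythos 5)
Your proposal is correct and follows essentially the same route as the paper: the $\subseteq$-direction reuses the sentence from the non-uniform proof and applies Lemma~\ref{lemAppFOint} to an interpretation that outputs the circuit structure together with the original $\tString$- and arithmetic relations (the paper's $I'$ with $I'(\mathcal{A}_w)=\mathcal{A}_w\cup C_{|w|}$), and the $\supseteq$-direction reuses the non-uniform circuit and exhibits an explicit $\FO[\PLUS,\TIMES]$-interpretation for it. The only detail you gloss over is that the paper fixes the gate-name arity by padding shorter tuples with an extra universe element $n$ (obtained via the map $w\mapsto 0w$), rather than admitting ``boundedly many tuple-lengths,'' but this is a presentational point, not a gap.
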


\begin{proof}
\underline{$\subseteq$}: Let $f \in \FO\textrm{-uniform } \NumAC$ via the circuit family $\mathcal{C} = (C_n)_{n \in \mathbb{N}}$ and the FO-interpretation $I$ showing its uniformity. With the formula $\varphi$ from the proof of $\NumAC \subseteq \WinFOarb$ this means we have \fa $w$:
\begin{align*}
f(w) & = \textrm{number of proof trees of } C_{|w|} \textrm{ on input } w\\
& = \CWin(\varphi, \mathcal{A}_w \cup C_{|w|}),
\end{align*}
By $\mathcal{A}_w \cup C_{|w|}$ we mean the structure $I(\mathcal{A}_w)$ which is the circuit $C_{|w|}$, where the gates are tuples over the universe of $\mathcal{A}_w$, modified with additional access to the structure  $\mathcal{A}_w$ on the adequate subset of the universe. Let $I'$ be an \FO-interpretation with $I'(\mathcal{A}_w) = \mathcal{A}_w \cup C_{|w|}$ for all $w$. This can easily be constructed from $I$. Now, from $\varphi$ and $I'$ by Lemma \ref{lemAppFOint} we get $\varphi'$ over vocabulary \tString \stfa $\mathcal{A}_w \in \struc[\tString]$:
\begin{align*}
\CWin(\varphi', \mathcal{A}_w) & = \CWin(\varphi, \underbrace{I'(\mathcal{A}_w)}_{\mathcal{A}_w \cup C_{|w|}})\\
& = f(w)
\end{align*}
\underline{$\supseteq$}: We can prove this analogously to $\NumAC \supseteq \WinFOarb$. The only difference is that we need to show FO-uniformity of the circuit. Let $f$ be a function in \WinFO with witness $\varphi$. We now need the formulae $\varphi_\textrm{universe}$, $\varphi_{G_\land}$, $\varphi_{G_\lor}$, $\varphi_E$, $\varphi_\textrm{Input}$, $\varphi_\textrm{negatedInput}$ and $\varphi_r$ defining the circuit.

The second part of the circuit from the proof of the non-uniform version is fixed except for the inputs and constants, which depend on the input size. The inputs are defined by giving the index of the input bit they are associated with. This index is for each input given by one of the variables chosen on the way from the root to the input gate. As we will see in the construction of the first part of the circuit, we will have direct access to these values: Each choice will be stored in a seperate variable. The constants depend on the truth value of certain predicates, so their truth-value is \FO-definable. We can use this by replacing each such constant by a gate with inputs $y$ and $\neg y$ for some variable $y$. The gate is an $\land$-gate, if the constant should be 0 and an $\lor$-gate otherwise.

Now it remains to show, that the first part of the circuit can be made uniform as well. For this, start with an FO-interpretation mapping each string $w \in \struc[\tString]$ to the string $0w \in \struc[\tString]$ (the 0 is appended as the new MSB). Together with the fact that the composition of FO-interpretations is still an FO-interpretation \cite{ImmermanBuch}, it now suffices to give an FO-interpretation
\begin{align*}
\struc[\tString] & \rightarrow \struc[\tCirc]\\
0w & \mapsto C_{|w|}
\end{align*}

Let $\varphi = Q_1 z_1 \dots Q_k z_k \psi(z_1, \dots, z_k)$ with quantifier-free $\psi$. Then each gate in the circuit can be given as a sequence of $k$ values in the range $\{0,\dots,n\}$: The root is the sequence consisting of $n$ in every component. For other nodes, each position in the sequence chooses a child in one level of the circuit. For inner nodes we leave a suffix of a tuple set to $n$, meaning that no children were chosen on those levels. As we can see, having the additional element $n$ as a padding element is quite convenient. 

Following the idea above, the formulae in the \FO-interpretation have to express:
\begin{itemize}
  \item A tuple is in the universe, iff after a component which is set to $n$ all components afterwards have to be $n$ as well. 
  \item There is a directed edge from tuple $\tu{x}$ to tuple $\tu{y}$, iff $\tu{y}$ has exactly one component less set to $n$ than $\tu{x}$.
  \item A gate is an $\land$-gate (resp. $\lor$-gate), iff its layer in the circuit corresponds to a $\forall$-quantifier (resp. $\exists$-quantifier) in $\varphi$.
  \item A gate is the root, if all of its components are set to $n$.
\end{itemize}
In detail, the formulae in the \FO-interpretation can be given as follows, with $\tu{x} = (x_1, \dots, x_k)$ and $\tu{y} = (y_1, \dots, y_k)$:
\[\varphi_\textrm{uni}(\tu{x}) \defeq \quad \bigwedge_{1 \leq i \leq k}\left(x_i = n \to \bigwedge_{i \leq j \leq k}{x_j = n}\right),\]
\begin{align*}
\varphi_E(\tu{x}, \tu{y}) \defeq & \quad \bigwedge_{1\leq i \leq k}{(x_i \neq n \rightarrow x_i = y_i)} \ \land \left(\vphantom{\bigvee_{2 \leq i \leq k}} (y_1 \neq n \land y_2 = n \land x_1 = n) \lor \right.\\
& \quad \left. \phantom{\bigwedge_{1\leq i \leq k}} \bigvee_{2 \leq i \leq k} \big( y_i \neq n \land y_{i+1} = n \land x_i = n \land x_{i-1} \neq n \big) \vphantom{\bigvee_{2 \leq i \leq k}} \right),
\end{align*}
\[\varphi_{G_\land}(\tu{x}) \defeq \quad \bigvee_{\substack{0 \leq i < k\\Q_{i+1} = \forall}}{\left(\bigwedge_{1\leq j \leq i}{x_i \neq n} \land \bigwedge_{i+1 \leq j \leq k}{x_i = n}\right)},\]
\[\varphi_{G_\lor}(\tu{x}) \defeq \quad \bigvee_{\substack{0 \leq i < k\\Q_{i+1} = \exists}}{\left(\bigwedge_{1\leq j \leq i}{x_i \neq n} \land \bigwedge_{i+1 \leq j \leq k}{x_i = n}\right)},\]
\[\varphi_r(\tu{x}) \defeq \quad x_1 = n.\]

\end{proof}

\section{A Model-Theoretic Characterization of \TC}
\label{sect:threshold}

We will now introduce the oracle class $\AC^\NumAC$ as well as \FOWinFOarb, which is a variant of \FO with counting. From the known connections between \TC and \NumAC and from the new connection between \NumAC and \WinFOarb we will then get a new model theoretic characterization of \TC, the class of all languages accepted by Boolean circuits of polynomial size and constant depth with unbounded fan-in AND, OR, and MAJORITY gates, see \cite{HeribertBuch}. Since MAJORITY is an additional type of gate, we have to change the vocabulary slightly. The vocabulary for majority circuits is
\[\tau_{\textrm{maj-circ}} \defeq (E^2, G_\land^1, G_\lor^1, G_{\textrm{MAJ}}^1, \textrm{Input}^2, \textrm{negatedInput}^2, r^1),\]
All predicates shared with $\tCirc$ are interpreted in the same way as before. The meaning of the new predicate is as follows:
\begin{itemize}
  \item $G_\textrm{MAJ}(x)$: gate $x$ is a MAJORITY gate
\end{itemize}

We now want to define oracle classes. A Boolean oracle-circuit with a function-oracle is a circuit with additional oracle gates we will call $\#$-gates. Within a Boolean circuit we obviously can not allow arbitrary outputs of oracle functions. Therefore, we allow only bitwise access to the oracle and $\#$-gates are labeled with an index to indicate which Bit they output. This means, if $f$ is the oracle and we have an $\#$-gate labeled with $i$, it computes the Boolean function
\begin{alignat*}{2}
f_i\colon & \{0,1\}^* && \rightarrow \{0,1\}\\
& x && \mapsto \textrm{BIT}(i, f(x)).
\end{alignat*}

Additionally, in contrast to the Boolean operations $\land$, $\lor$ and MAJORITY, the oracle gates do not neccessarily compute commutative functions, meaning that we need to specify the order of the inputs. We take this a step further and give indices for the inputs to each oracle gate. This leads us to the following vocabulary for oracle-circuits:
\[\tau_\textrm{o-circ} \defeq (E^2, G_\land^1, G_\lor^1, G_\#^2, \textrm{Index}^3, \textrm{Input}^2, \textrm{negatedInput}^2, r^1)\]
The meaning of the predicates it shares with $\tCirc$ stays the same. The meaning of the new predicates is as follows:
\begin{itemize}
  \item $G_\#(x,i)$: gate $x$ is an oracle gate accessing the $i$-th bit of the oracle function
  \item $\textrm{Index}(x,y,i)$: $x$ is an oracle gate and $y$ is the $i$-th child of $x$ 
\end{itemize}

We use the following notation: Let $C$ be a class of Boolean circuits, $A$ the class of languages decidable by circuits from $C$ and $B$ a class of functions $\{0,1\}^* \rightarrow \{0,1\}^*$. Then $A^B$ is the class of languages decidable by circuits from $C$ with the addition of oracle gates, using oracles from $B$. For this purpose, functions $\{0,1\}^* \to \mathbb{N}$ can of course be viewed as functions $\{0,1\}^* \to \{0,1\}^*$ by encoding natural numbers in binary.

Additionally, we will also use oracle majority circuits. For this we use the vocabulary
\[\tau_\textrm{o-maj-circ} \defeq (E^2, G_\land^1, G_\lor^1, G_\textrm{MAJ}^1, G_\#^2, \textrm{Index}^3, \textrm{Input}^2, \textrm{negatedInput}^2, r^1),\]
where the predicates are interpreted analogously to their meaning in $\tau_\textrm{maj-circ}$ or $\tau_\textrm{o-circ}$, respectively.

The main result of this section will be a new characterization of the circuit class $\TC$ using a certain two-sorted logic.

\begin{definition}\label{def:FOCW}
Given a vocabulary $\sigma$, a $\sigma$-structure for \FOWinFOarb is a structure of the form
\[(\{a_0, \dots, a_{n-1}\}, \{0, \dots, n-1\}, (R_i)^\mathcal{A}, +, \times, \underline{\min}, \underline{\max}),\]
where $(\{a_0, \dots, a_{n-1}\}, (R_i)^\mathcal{A}) \in \textrm{STRUC}[\sigma]$, $+$ and $\times$ are the ternary relations corresponding to addition and multiplication in $\mathbb{N}$ and $\underline{\min}, \underline{\max}$ denote $0$ and $n-1$, respectively. We assume that the two universes are disjoint. Formulae can have free variables of two sorts.\\
This logic extends the syntax of first order logic as follows:
\begin{itemize}
  \item terms of the second sort:
    $\underline{\min}$, $\underline{\max}$
  \item formulae:
  \begin{enumerate}[(1)]
    \item if $t_1, t_2, t_3$ are terms of the second sort, then the following are (atomic) formulae: $+(t_1, t_2, t_3), \times(t_1, t_2, t_3)$
    \item if $\varphi(x,i)$ is a formula, then also $\exists i \varphi(x,i)$ (binding the second-sort variable $i$)
    \item if $Q$ is a quantifier prefix quantifying the first-sort variables $\overline{x}$ and the second-sort variables $\overline{i}$, $\varphi(\overline{x}, \overline{i})$ is a quantifier-free formula and $\overline{j}$ a tuple of second-sort variables, then the following is a formula: $\#_{Q\varphi}(\overline{j})$
  \end{enumerate}
\end{itemize}
\end{definition}
The semantics is clear except for $\#_{Q\varphi}(\overline{j})$. Let $\mathcal{A}$ be an input structure and $\overline{j}_0$ an assignment for $\overline{j}$. Then
\[\mathcal{A} \vDash \#_{Q\varphi}(\overline{j}_0) \quad \defeqv \quad \textrm{the val($\overline{j}_0$)-th bit of } \CWin(Q\varphi, \mathcal{A}) \textrm{ is } 1\]
Here, $\textrm{val}(\overline{j}_0)$ denotes the numeric value of the vector $\overline{j}_0$ under an appropriate encoding of the natural numbers as tuples of elements from the second sort.

The types (1) and (2) of formulae in our definition are the same as in Definition~8.1 in \cite[p.~142]{LibkinBuch}. Additionally, our definition allows new formulae $\#_{Q\varphi}(\overline{j})$. These allow us to talk about the number of winning strategies for sub-formulae $Q\varphi$. Note that these numbers can be exponentially large, hence polynomially long in binary representation; therefore we can only talk about them using some form of BIT predicate. Formulae of type (3) are exactly this: a BIT predicate applied to a number of winning strategies.

Our logic \FOWinFOarb thus gives \FO with access to number of winning strategies, i.e., in \FOWinFOarb we can count in an exponential range. Libkin's logic FO(Cnt)$_{\mathsf{All}}$ can count in the range of input positions, i.e., in a linear range. Nevertheless we will obtain the maybe somewhat surprising result that both logics are equally expressive on finite structures: both correspond to the circuit class \TC.

First, we give a technical result showing a certain closure property of $\NumAC$.

\begin{definition}
A class $\mathcal{C}$ of functions $\{0,1\}^* \to \mathbb{N}$ is closed under polynomially padded concatenation if for all $f,g \in \mathcal{C}$ there is a polynomial $p$ \ST the function
\[h(x) \defeq f(x)0^{p(|x|)-|g(x)|}g(x)\]
is also in $\mathcal{C}$, where by $f(x)0^{p(|x|)-|g(x)|}g(x)$ we mean the concatenation of the parts as a binary string.
\end{definition}

\begin{lemma}\label{lem:polyPad}
\NumAC and $\FO[\PLUS, \TIMES]$-uniform \NumAC are closed under polynomially padded concatenation. 
\end{lemma}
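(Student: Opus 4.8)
The plan is to prove both parts of Lemma~\ref{lem:polyPad} by directly constructing a circuit for $h$ out of circuits for $f$ and $g$, preserving the number of proof trees. Suppose $f$ is witnessed by the circuit family $(F_n)_{n\in\N}$ and $g$ by $(G_n)_{n\in\N}$, both of constant depth and polynomial size, with $|f(x)|, |g(x)| \le q(|x|)$ for some polynomial $q$; we take $p \defeq q$. The string $h(x)$ has a fixed polynomial length $|f(x)| + p(|x|)$ where, per the definition, the middle block $0^{p(|x|)-|g(x)|}$ pads $g(x)$ out to length exactly $p(|x|)$. (Strictly, to make $h$ well-defined as a string we should also pad $f(x)$ to a fixed length $p(|x|)$, or note that $f(x)$, being the high-order part, determines its own length from the circuit family; I would fix this by padding $f$ as well, so $h(x)$ has length $2p(|x|)$ — the resulting function is interdefinable with $h$ under the intended reading.) The $i$-th output bit of $h(x)$ is then: bit $i$ of $f(x)$ if $i$ falls in the first block; $0$ if $i$ falls in the padding region $p(|x|)-|g(x)| \le i' < p(|x|)$ of the second block (where $i'$ is the offset within the second block); and bit $(i' - (p(|x|)-|g(x)|))$ of $g(x)$ if $i'$ is in the low part of the second block.

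The key steps, in order: (1) For each output bit position $i$ of $h$, build a subcircuit $C^{(i)}_n$ whose number of proof trees equals that bit of $h(x)$. For positions in the $f$-block this is just the subcircuit of $F_n$ computing the relevant output bit (arithmetic circuits for $\NumAC$ naturally have one output per bit, or we take the family computing that bit). For a padding position, $C^{(i)}_n$ is the constant-$0$ circuit (a single $0$-literal, or an empty OR), contributing exactly $0$ proof trees. For a $g$-block position, the subtlety is that whether a given absolute position $i$ is a padding position or a genuine $g$-bit depends on $|g(x)|$, which depends on $x$, not just on $n = |x|$. I handle this the standard way: build a circuit that, for the position in question, computes an AND of (the appropriate output bit of $G_n$) with (a comparison gate checking $j < |g(x)|$, where $j$ is the offset into $g(x)$ and $|g(x)|$ is itself computed from $n$-many bits of $G_n$'s output-length indicators). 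Since $|g(x)| \le q(n)$ is a number with $O(\log n)$ bits, comparison and the needed arithmetic on it are computable by constant-depth circuits over the output bits of $G_n$, and composing a constant-depth circuit with a constant-depth circuit keeps constant depth and polynomial size. When the selector bit is a Boolean ($0$/$1$-valued) subcircuit with a unique proof tree when it is $1$ — which we can arrange by the minterm/DNF trick used in the proof of Theorem~\ref{thm:main} — the AND multiplies proof-tree counts correctly, yielding exactly the desired bit. (2) Assemble: $h$ as a function outputs the concatenation, and "outputting a string bit by bit" in $\NumAC$ just means having a separate output for each bit; so the full circuit family for $h$ is the disjoint union of the $C^{(i)}_n$'s, which is still constant depth and polynomial size. (3) For the uniform version, observe that every construction above — taking the relevant output-bit subcircuit of $F_n$ or $G_n$, building the fixed constant-$0$ gadgets, building the comparator on the $O(\log n)$-bit length value, and composing — is itself describable by an $\FO[\PLUS,\TIMES]$-interpretation, using that the composition of $\FO[\PLUS,\TIMES]$-interpretations is again one (as used in the proof of Theorem~\ref{thm:mainUniform}); arithmetic on $\log$-sized quantities is directly available via $\PLUS$ and $\TIMES$ on the universe.

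The main obstacle I expect is step (1) for the $g$-block positions: getting the padding boundary right in a way that is simultaneously (a) computed from the input-length $n$ together with the actual bits of $g(x)$, not hard-wired, and (b) proof-tree-preserving. The danger is that the subcircuit deciding "this position is still inside $g(x)$" might, if built naively, have many proof trees and thereby corrupt the count. The fix is exactly the device from the $\subseteq$-direction of Theorem~\ref{thm:main}: route all such auxiliary Boolean computations through circuits normalized so that they have a unique proof tree whenever they output $1$ and none otherwise (expand to minterms / full DNF), so that ANDing them against the genuine $\NumAC$-subcircuit multiplies by $1$ or $0$ and leaves the intended value untouched. Once this normalization is in place, everything else is routine bookkeeping about circuit depth, size, and the definability of the interpretation.
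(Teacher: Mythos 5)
There is a genuine gap, and it stems from a misreading of what a $\NumAC$ function is in this paper. A function $f\in\NumAC$ is witnessed by a \emph{single} circuit family with one output gate, and $f(x)$ is the \emph{number of proof trees} of $C_{|x|}$ on $x$ --- a single natural number. ``Polynomially padded concatenation'' is just a way of naming the number $h(x)=f(x)\cdot 2^{p(|x|)}+g(x)$ (the binary representations line up because $g(x)<2^{p(|x|)}$, so there are no carries and the padding zeros appear automatically). Your construction instead treats $\NumAC$ as a class of multi-output circuits producing a string ``bit by bit,'' and assembles $h$ as a disjoint union of per-bit circuits $C^{(i)}_n$. That object is not a witness for membership in $\NumAC$ under the paper's definition, so even if every step worked, the conclusion would not follow. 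Moreover, the per-bit subcircuits you rely on do not exist: there is no ``subcircuit of $F_n$ computing the relevant output bit,'' and extracting an individual bit of the proof-tree count of an $\AC$ circuit by a constant-depth Boolean circuit is precisely the kind of task that needs $\TC$ (iterated addition/multiplication), cf.~Sect.~\ref{sect:threshold}. Concretely, a single $\lor$-gate over the input literals is a $\NumAC$ circuit whose value is the Hamming weight of the input; its middle or high-order bits are not computable in $\AC$, so your step (1) already fails for the $f$-block, before the padding-boundary issues for the $g$-block (which require comparing against $|g(x)|$, itself a function of those inaccessible bits) even arise.

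The intended proof is a three-gate arithmetic identity: build a constant-depth subcircuit with exactly $2^{p(n)}$ proof trees (a $\times$-gate over $p(n)$ copies of a gadget with two proof trees, e.g.\ a $+$-gate with two constant-$1$ children), multiply it ($\times$-gate) with the output of the circuit for $f$ to obtain $f(x)\cdot 2^{p(n)}$ proof trees, and add ($+$-gate) the output of the circuit for $g$ to obtain $f(x)\cdot 2^{p(n)}+g(x)=h(x)$ proof trees. Size and depth bounds are immediate, and the construction is plainly $\FO[\PLUS,\TIMES]$-definable from the two given interpretations, which settles the uniform case. None of the normalization or comparator machinery you introduce is needed once the concatenation is read as a single number.
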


\begin{proof}
Let $f, g \in \NumAC$ via circuit families $\mathcal{C}_1, \mathcal{C}_2$ and let $p$ be a polynomial bounding the length of all outputs of the function $g$ depending on the length of the input. The following circuit shows $h(x) \defeq f(x)0^{p(|x|) - |g(x)|}g(x) \in \NumAC$:\\
Compute $f(x)$ using the circuit $\mathcal{C}_1$, then shift the result by $p(|x|)$ bits to the left. Compute seperately $g(x)$ using the circuit $\mathcal{C}_2$. Add both results together. Shifting can be done by multiplication with a subcircuit computing $2^{p(|x|)}$. Figure \ref{tikzPic} illustrates a bit more detailed how this is done with our definition of a circuit (edges have no multiplicities).

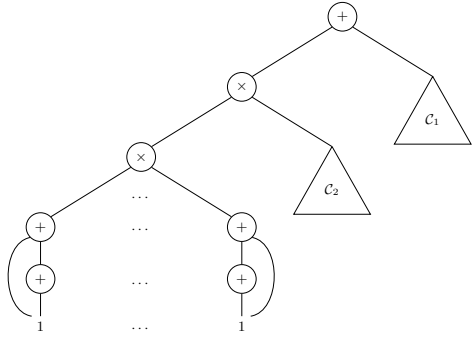
\begin{figure}
\capstart
\begin{center}
\scalebox{.6}{
\begin{tikzpicture}

\node [circle, draw](root) {$+$};
\draw (root.south west) -- ($(root.south) - (2,1)$);
\draw (root.south east) -- ($(root.south) + (2,-1)$) node (C1) {} -- ($(C1) - (0.85,1.5)$) -- ($(C1) + (0.85,-1.5)$) -- ($(root.south) + (2,-1)$);
\node[below=6mm of C1] {$\mathcal{C}_1$};

\node [circle, draw] (shift)[below left= 1 and 2 of root, anchor=north] {$\times$};
\draw (shift.south west) -- ($(shift.south) - (2,1)$);
\draw (shift.south east) -- ($(shift.south) + (2,-1)$) node (C2) {};
\draw ($(shift.south) + (2,-1)$) -- ($(C2) - (0.85,1.5)$) -- ($(C2) + (0.85,-1.5)$) -- ($(shift.south) + (2,-1)$);
\node[below=6mm of C2] {$\mathcal{C}_2$};

\node [circle, draw] (pwr2)[below left= 1 and 2 of shift, anchor=north] {$\times$};
\draw (pwr2.south west) -- ($(pwr2.south) - (2,1)$);
\draw (pwr2.south east) -- ($(pwr2.south) + (2,-1)$);
\node[below=4mm of pwr2] {$\cdots$};

\node [circle, draw] (plus1)[below left= 1 and 2 of pwr2, anchor=north] {$+$};
\node [circle, draw] (plus2)[below right= 1 and 2 of pwr2, anchor=north] {$+$};
\draw (plus1.south) -- ($(plus1.south) - (0,0.5)$) node[circle, draw, anchor=north] (plus3) {$+$};
\draw (plus2.south) -- ($(plus2.south) - (0,0.5)$) node[circle, draw, anchor=north] (plus4) {$+$};
\draw (plus3.south) -- ($(plus3.south) - (0,0.5)$) node[anchor=north] (one1) {1};
\draw (plus4.south) -- ($(plus4.south) - (0,0.5)$) node[anchor=north] (one2) {1};

\node[below=11mm of pwr2] {$\cdots$};
\node[below=23mm of pwr2] {$\cdots$};
\node[below=33mm of pwr2] {$\cdots$};

\draw (plus1.south west) to[bend right=80] (one1.north west);
\draw (plus2.south east) to[bend left=80] (one2.north east);

\end{tikzpicture}
}
\caption{\NumAC-circuit for $f(x)0^{p(|x|) - |g(x)|} g(x)$}
\label{tikzPic}
\end{center}
\end{figure}

Obviously, this circuit has still polynomial size in the input length and computes $h(x)$.\\
It also can be easily seen that if $\mathcal{C}_1$ and $\mathcal{C}_2$ are uniform circuit families, then the constructed circuit family is uniform.
\end{proof}

\begin{theorem}\label{thm:threshold}
$\TC = \FOWinFOarb = {\AC}^{\NumAC}$
\end{theorem}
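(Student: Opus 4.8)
The plan is to prove the two equalities $\TC = \AC^\NumAC$ and $\AC^\NumAC = \FOWinFOarb$ separately, the first being essentially known circuit complexity and the second being the genuinely new content built on Theorem~\ref{thm:main}. For $\TC = \AC^\NumAC$, I would recall that MAJORITY of $m$ bits is computable by checking whether the sum of the input bits (an $\NumAC$-function, since counting $1$-inputs is counting proof trees of a single big $\lor$ over the true literals, or more simply since iterated addition of $n$ Boolean values is in $\NumAC$) exceeds $m/2$; comparing two binary numbers is in $\AC$. Hence each MAJORITY gate of a \TC-circuit can be replaced by an $\AC$-circuit with one $\NumAC$-oracle gate, giving $\TC \subseteq \AC^\NumAC$. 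Conversely, an $\NumAC$-oracle function has output of polynomial bit-length, and each output bit is itself an $\AC$-reduction of iterated multiplication/addition of polynomially many polynomial-bit numbers; iterated multiplication and iterated addition of polynomially many numbers are in \TC (this is classical, e.g.\ via Chinese remaindering / Hesse--Allender--Barrington), so $\AC$ with $\NumAC$-oracle gates collapses into \TC. A little care is needed because the oracle gates are bitwise and ordered, but this is exactly what the vocabulary $\tau_\textrm{o-circ}$ was set up to handle, and composition of constant-depth reductions stays constant-depth.

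For $\FOWinFOarb \subseteq \AC^\NumAC$, I would argue by structural induction on \FOWinFOarb-formulae. The two-sorted first-sort/second-sort skeleton with $+,\times,\underline{\min},\underline{\max}$ is precisely (the non-uniform version of) the setting in which $\FO$ over ordered structures equals $\AC$, so formulae built from types (1) and (2) alone define $\AC$-predicates (using $\AC = \FOarb$, Theorem just after Def.~\ref{def:FOarb}). The new atoms $\#_{Q\varphi}(\overline j)$ are handled by Theorem~\ref{thm:main}: the subformula $Q\varphi$ has $\CWin(Q\varphi,\mathcal A) \in \WinFOarb = \NumAC$, so $\#_{Q\varphi}(\overline j)$ queries a fixed bit of an $\NumAC$-function — exactly an $\NumAC$-oracle gate, with the index supplied by $\overline j$. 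Boolean connectives and quantifiers over the atoms then wrap everything in a further layer of $\AC$, which $\AC^\NumAC$ absorbs (since $\AC^{\AC^\NumAC} = \AC^\NumAC$). One must check that $Q\varphi$ inside $\#_{Q\varphi}$ may itself use free second-sort variables bound outside; these become additional free arguments of the $\NumAC$-function, and $\NumAC$ is closed under such parametrization. Conversely, for $\AC^\NumAC \subseteq \FOWinFOarb$: express the $\AC$-part by an \FO-formula over $\tau_\textrm{o-circ}$ à la $\AC = \FOarb$, and wherever the circuit queries bit $i$ of the oracle $f = \CWin(Q\varphi,\cdot)$ (with $f \in \NumAC = \WinFOarb$ by Theorem~\ref{thm:main}), replace that input atom by the \FOWinFOarb-atom $\#_{Q\varphi}(\overline j)$ where $\overline j$ encodes $i$; the multiple distinct oracle gates with different indices are fine because $\overline j$ is just a second-sort tuple built from the circuit-description variables. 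Lemma~\ref{lem:polyPad} is where I expect to actually need padded concatenation: if an $\AC^\NumAC$-computation uses \emph{several} distinct oracle functions $f_1,\dots,f_m \in \NumAC$, I would first merge them into one $\NumAC$-function $f$ via polynomially padded concatenation so that there is a single subformula $Q\varphi$ to attach, and bit $i$ of $f_\ell$ becomes a fixed offset bit of $f$.

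The main obstacle, and the step deserving the most care, is the bookkeeping in the $\AC^\NumAC \subseteq \FOWinFOarb$ direction: translating the circuit-description \FO-formula so that the oracle-gate indices (which in the circuit are elements/tuples of the universe, potentially depending on the path from the root) are correctly reflected as the second-sort index tuple $\overline j$ in $\#_{Q\varphi}(\overline j)$, under the right encoding $\mathrm{val}(\cdot)$, while simultaneously ensuring the parametrizing free variables of $Q\varphi$ are the ones intended. Getting the non-uniformity right is the other subtlety — the \FOWinFOarb-structure carries built-in $+,\times$ but the extra relations $(R_i)^\mathcal A$ on the first sort are where the non-uniform advice lives, matching the arbitrary-interpretation advice of $\FOarb$ and the non-uniform circuit family of $\AC^\NumAC$; one should check these line up, essentially repeating the standard $\AC=\FOarb$ argument in the two-sorted format.

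Altogether: $\TC \subseteq \AC^\NumAC$ via MAJORITY-as-threshold-of-a-sum; $\AC^\NumAC \subseteq \TC$ via iterated-sum/product $\in \TC$; $\FOWinFOarb \subseteq \AC^\NumAC$ by induction using $\AC = \FOarb$ for the FO part and Theorem~\ref{thm:main} for each $\#_{Q\varphi}$-atom; and $\AC^\NumAC \subseteq \FOWinFOarb$ by writing the oracle-circuit as an \FO-interpretation in $\tau_\textrm{o-circ}$, merging oracles via Lemma~\ref{lem:polyPad}, and replacing each bitwise oracle query by a $\#_{Q\varphi}$-atom using $\WinFOarb = \NumAC$ again.
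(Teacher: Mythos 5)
Your architecture is genuinely different from the paper's: the paper never attempts a direct proof of $\AC^{\NumAC}\subseteq\FOWinFOarb$, but instead closes one cycle $\TC\subseteq\textrm{PAC}^0\subseteq\FOWinFOarb\subseteq\AC^{\NumAC}\subseteq\TC$, where $\textrm{PAC}^0$ (languages decided by comparing two $\NumAC$-functions of the input; $\TC=\textrm{PAC}^0$ is imported from the literature) is the pivot. Your inclusions $\AC^{\NumAC}\subseteq\TC$ and $\FOWinFOarb\subseteq\AC^{\NumAC}$ coincide with the paper's steps, including the use of Lemma~\ref{lem:polyPad} to merge the constantly many oracle functions, and your direct $\TC\subseteq\AC^{\NumAC}$ via MAJORITY-as-threshold-of-a-count is a legitimate substitute for the paper's detour.

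The genuine gap is in your fourth inclusion, $\AC^{\NumAC}\subseteq\FOWinFOarb$. An oracle gate in an $\AC^{\NumAC}$ circuit applies $f$ to the string of values of its \emph{children}, which are arbitrary internal gates, whereas the atom $\#_{Q\varphi}(\overline{j})$ of Definition~\ref{def:FOCW} only yields bit $\textrm{val}(\overline{j})$ of $\CWin(Q\varphi,\mathcal{A})$ for the \emph{input structure} $\mathcal{A}$ and a fixed parameter-free sentence $Q\varphi$. So ``replace the bitwise oracle query by a $\#$-atom'' is only sound when the oracle gate's children are the input literals in order. If the children are computed by oracle-free $\AC$-subcircuits one can compose those subcircuits into the $\NumAC$-function and invoke $\NumAC=\WinFOarb$ again, but a circuit has polynomially many oracle gates, each with its own children, while a fixed formula contains only constantly many $\#$-atoms and the definition does not allow $Q\varphi$ to carry free first-sort parameters (so your remark about externally bound variables becoming extra arguments of the $\NumAC$-function is not licensed by the syntax as given). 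Worse, nothing in the definition of $\AC^{\NumAC}$ forbids oracle gates feeding into oracle gates --- your own $\TC\subseteq\AC^{\NumAC}$ construction produces exactly such nesting --- and $\NumAC$ composed with $\AC^{\NumAC}$-computable inner functions is not known to remain in $\NumAC$. To repair this you would either need a normal form pushing all oracle gates down to the inputs, which I do not see how to obtain without first collapsing to $\TC$, or you must do what the paper does: route $\AC^{\NumAC}\subseteq\TC\subseteq\textrm{PAC}^0\subseteq\FOWinFOarb$, where the only $\#$-atoms ever needed are bits of two $\NumAC$-functions of the input itself, compared bitwise by an ordinary first-order formula.
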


\begin{proof}
A central ingredient of the proof is the known equality $\TC=\textrm{PAC}^0$ from \cite{PAC0TC0}. Here, $\textrm{PAC}^0$ is defined to be the class of languages $L$ for which there exist functions $f,h\in\NumAC$ such that for all $x$, $x\in L$ iff $f(x)>h(x)$.
We prove the result by establishing the following chain of inclusions:
\begin{align*}
\TC & \subseteq \textrm{PAC}^0\\
	 & \stackrel{(1)}{\subseteq} \FOWinFOarb\\
	 &  \stackrel{(2)}{\subseteq} {\textrm{AC}^0}^{\textrm{\#AC}^0}\\
	 & \stackrel{(3)}{\subseteq} \TC
\end{align*}
As mentioned above, $\TC \subseteq \textrm{PAC}^0$ is known. We will now show the rest of the above inclusions one by one.

\textit{Proof of (1)}: Let $L \in \textrm{PAC}^0$. There are $f,g \in \textrm{\#AC}^0$ \stfa $x$:
\[x \in L \Leftrightarrow f(x) > g(x)\]
Since $\textrm{\#AC}^0 = \WinFOarb$, we have $f,g \in \WinFOarb$.  Therefore, we can get the bits of $f(x)$ and $g(x)$ using the \#-predicate. Let $\psi_1, \psi_2$ be the formulae which show $f \in \WinFOarb$ and $g \in \WinFOarb$, respectively. We now need a formula checking whether $f(x) > g(x)$. We know that $|f(x)|$ and $|g(x)|$ are both polynomially bounded in $|x|$. This means, that we need tuples of variables of the second sort as indices. The formula roughly looks as follows:
\begin{align*}
\exists \tu{p}_0 & \Big[ f_{\tu{p}_0}(x) = 1 \land \forall(\tu{q} > \tu{p}_0) (f_{\tu{q}}(x) = 0 \land g_{\tu{q}}(x) = 0)\\
& \land \exists (\tu{p}_1 \leq \tu{p}_0) \big((\forall(\tu{p}_0 \geq \tu{q} > \tu{p}_1) (f_{\tu{q}}(x) = g_{\tu{q}}(x))\big) \land f_{\tu{p}_1}(x) > g_{\tu{p}_1}(x)\Big]
\end{align*}
Here, $f_i$ (resp. $g_i$) is a shortcut for $\#_{\psi_1}(i)$ (resp. $\#_{\psi_2}(i)$). Note that all variables in the formula are second sort variables.

\textit{Proof of (2)}: Let $L \in \textrm{FOCW[Arb]}$ via $\varphi$, where $\varphi$ is in prenex normal form. Second sort variables can be treated in the same way as first sort variables. This means that the only difference to a usual $\FO$-formula are occuring $\#$-predicates. Hence, we can build a circuit from $\varphi$ analogously to the one in the proof for $\FO[\textrm{Arb}] \subseteq \AC$. The difference is, that within the second part of the circuit, evaluating the quantifier-free part of $\varphi$, some $\#$-predicates have to be evaluated. If all occurrences of the $\#$-predicate in $\varphi$ access bits of the same \NumAC-function, this would be easy: We would use that function as our oracle and would need exactly one oracle gate for each occurrence. If the $\#$-predicates refer to different $\AC$-functions, we first need to build a single oracle from them, which allows to access each of them as needed. For this, we use a polynomially padded concatenation: Since there are only constantly many different \NumAC-functions accessed via occurrences of $\#$-predicates in $\varphi$, polynomially padded concatenation of all of them is still in \NumAC. Now again, we can use exactly one oracle gate to compute the value of each of the occurrences---only the index must be changed according to the construction of the polynomially padded concatenation.

\textit{Proof of (3)}: This can be seen with the following chain of inclusions:
\begin{align*}
{\AC}^\NumAC & \stackrel{\textrm{(i)}}{\subseteq} {\TC}^{\NumAC}\\
& \stackrel{\textrm{(ii)}}{\subseteq} {\TC}^{\TC}\\
& \stackrel{\textrm{(iii)}}{\subseteq} \TC
\end{align*}
(i) follows from the fact that \AC oracle-circuit families are also \TC oracle-circuit families. (ii) is due to ITADD and ITMULT being in \TC. This means that the bits of \NumAC-functions can be computed in \TC. Since function oracles only give bitwise access, a \NumAC-oracle can be replaced by a \TC-oracle that takes the output index as additional input. For (iii), take any \TC oracle-circuit family and \TC-oracle. A \TC circuit family computing the same function can be constructed by replacing all oracle gates by subcircuits computing the needed function.

\end{proof}


We now want to show the uniform version of Theorem \ref{thm:threshold}. We start by showing that our definition of oracle-circuits in fact allows us to replace oracle gates by adequate subcircuits, as long as the oracle is weak enough. We do this for the special case of the class \TC and also for a language oracle instead of a function oracle. A language oracle for language $L$ can be viewed as a function oracle of the characteristic function $c_L$, such that we do not need a new definition.

\begin{lemma}\label{lem:oracle}
Let $A \subseteq \TC$. Then ${\TC}^A \subseteq \TC$.
\end{lemma}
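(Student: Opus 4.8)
The statement $\TC^A \subseteq \TC$ for any $A \subseteq \TC$ is the uniform analogue of part (3)(iii) of Theorem~\ref{thm:threshold}; the approach is to show that an oracle gate querying a language in $A$ can be replaced, inside the circuit, by a constant-depth subcircuit that decides that language, and that this replacement preserves $\FO[\PLUS,\TIMES]$-uniformity. First I would fix a $\TC$ oracle-circuit family $\mathcal{D} = (D_n)_{n\in\mathbb{N}}$ using an oracle $B \in A \subseteq \TC$, together with the $\FO[\PLUS,\TIMES]$-interpretation $J\colon \struc[\tString] \to \struc[\tau_\textrm{o-maj-circ}]$ witnessing its uniformity, and a $\TC$ circuit family $\mathcal{E} = (E_m)_{m\in\mathbb{N}}$ deciding $B$ with its own uniformity interpretation $J'\colon \struc[\tString] \to \struc[\tau_\textrm{maj-circ}]$. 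The goal is to build a single $\TC$ family $\mathcal{D}' = (D'_n)_{n\in\mathbb{N}}$ together with one $\FO[\PLUS,\TIMES]$-interpretation computing $D'_n$ from $\mathcal{A}_w$.

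The construction: inside $D_n$, each oracle gate $g$ (identified via $G_\#$ and its indexed children via $\textrm{Index}$) has at most $n$ Boolean inputs $b_1,\dots,b_m$ forming a query string of some length $m \le \textrm{poly}(n)$; since $B \in \TC$, the bit queried from $B$ on that string equals $E_m(b_1\cdots b_m)$. I would replace $g$ by an isomorphic copy of $E_m$, wired so that the $j$-th input leaf of the copy of $E_m$ is fed by the $j$-th indexed child of $g$ in $\mathcal{D}$. Because $\mathcal{D}$ has constant depth $d$ and polynomial size, and $\mathcal{E}$ has constant depth $d'$ and polynomial size, the result $D'_n$ has depth at most $d\cdot d'$ (constant) and polynomial size, with only $\land$, $\lor$, and MAJORITY gates; hence $\mathcal{D}'$ is a legitimate $\TC$ family and computes the same language. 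The only subtlety about sizes is that different oracle gates may receive query strings of different lengths $m$; but $m$ is bounded by a polynomial in $n$ and the family $\mathcal{E}$ supplies a circuit for every length, so the substitution is always well-defined.

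For uniformity I would describe the gates of $D'_n$ as pairs: a gate of $D_n$ that is not an oracle gate stays as is, and an oracle gate $g$ of $D_n$ is ``blown up'' into gates named by $(g,h)$ where $h$ is a gate of $E_{m(g)}$. Using the $\BIT$-predicate one can, inside $\FO[\PLUS,\TIMES]$, compute from $w$ the circuit $D_{|w|}$ via $J$, for each oracle gate $g$ determine the arity $m(g)$ (the number of its indexed children), invoke $J'$ on a length-$m(g)$ pointer to obtain $E_{m(g)}$ as a $\tau_\textrm{maj-circ}$-structure, and assemble the edge, gate-type, root and input relations of $D'_n$ by a case distinction on whether a component comes from $D_n$ or from some $E_{m(g)}$; wiring the leaves of the copy of $E_{m(g)}$ to the indexed children of $g$ uses the $\textrm{Index}$ relation of $D_n$. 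Since the composition of $\FO[\PLUS,\TIMES]$-interpretations is again an $\FO[\PLUS,\TIMES]$-interpretation \cite{ImmermanBuch}, this yields a single uniformity interpretation for $\mathcal{D}'$. The main obstacle I expect is purely bookkeeping: encoding the two-level gate names $(g,h)$ as tuples over $|\mathcal{A}_w|$ of a fixed arity and expressing ``$h$ is a gate of $E_{m(g)}$ on query string given by the children of $g$'' in $\FO[\PLUS,\TIMES]$ uniformly in the varying query length $m(g)$; this is routine given that both $J$ and $J'$ are $\FO[\PLUS,\TIMES]$-interpretations and $\BIT$ is available, but it requires care to keep all tuple arities constant.
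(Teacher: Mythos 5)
Your proposal is correct and follows essentially the same route as the paper's proof: replace each oracle gate by the appropriate member $E_{m(g)}$ of a uniform $\TC$ family deciding the oracle language, bound the subcircuit sizes polynomially so that the new gates can be named by fixed-arity tuples (the paper handles this with a padding element, as in Theorem~\ref{thm:mainUniform}), and obtain uniformity by using the $\textrm{Index}$ relation to extract the query length $m(g)$ and composing the two $\FO[\PLUS,\TIMES]$-interpretations. The bookkeeping you flag as the main obstacle is exactly what the paper's proof spells out, so no essential idea is missing.
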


\begin{proof}
Let $L \in {\TC}^A$ via the \TC oracle-circuit family $\mathcal{C}$ and oracle $L' \in A$. Let $\mathcal{D}$ be a \TC circuit family deciding $L$. We now construct a \TC circuit family deciding $L$. Intuitively, this can be done by replacing the oracle gates by subcircuits deciding $L'$. These can be taken from $\mathcal{D}$. We now make this formal. The first thing to notice is that the number of inputs of different oracle gates within circuits from $\mathcal{C}$ can differ. Let $q$ be the polynomial bounding the size of circuits in $\mathcal{C}$ depening on the input length. Then the number of inputs to all oracle gates is also bounded by $q$. There is also a polynomial $p$ bounding the size of circuits in $\mathcal{D}$ depending on the input length. Thus, when replacing oracle gates by subcircuits from $\mathcal{D}$, the size of each subcircuit is bounded by $p \circ q$. This allows us to fix the tuple length for representation of gates in the new circuit to be the old tuple length plus $\textrm{deg}(p \circ q)$ (and possibly some part to deal with structures with small universes). Similarly to what we do in the proof of Theorem \ref{thm:mainUniform}, we add an  additional element $n$ to the universe and use it as padding. For all non-oracle gates only the first part of the tuple is used and connections are only built if all newly added components are $n$. For oracle gates, the second part is used to represent gates within the subcircuits. Gates that were parents of the oracle gate before will be parents of the output gate of the corresponding subcircuit afterwards. Gates that were children of the oracle gate will be directly connected to the input gates of the corresponding subcircuit (and these inputs can be made $\land$-gates). For the inner connections, the uniformity of $\mathcal{D}$ is used: Let $\mathcal{D} = (D_n)_{n \in \mathbb{N}}$. We know that there is an \FO-interpretation mapping $\mathcal{A}_w \mapsto D_{|w|}$. Due to the Index-predicate it is possible to determine the number of inputs of a given oracle gate. Let $m$ be this number. Since the numerical predicates can be extended to tuples \cite{ImmermanBuch}, there is also an \FO-interpretation $\mathcal{A}_w \mapsto \mathcal{A}_{0^m}$. Together, we get an \FO-interpretation $\mathcal{A}_w \mapsto D_m$. This means that the oracle gates can be uniformly replaced by subcircuits.
\end{proof}

We are now in the position to proof the uniform version of Theorem \ref{thm:threshold}.

\begin{theorem}
In the $\FO[\PLUS, \TIMES]\textrm{-uniform}$ setting,
\[\TC = \mathrm{FOCW} = {\AC}^\NumAC.\]
\end{theorem}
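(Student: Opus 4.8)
The plan is to re-run the four-step chain of inclusions from the proof of Theorem~\ref{thm:threshold}, namely
\[\TC \subseteq \textrm{PAC}^0 \subseteq \mathrm{FOCW} \subseteq {\AC}^{\NumAC} \subseteq \TC,\]
now checking at each step that the constructions respect $\FO[\PLUS,\TIMES]$-uniformity. For the first inclusion one invokes the uniform version of $\TC = \textrm{PAC}^0$ from \cite{PAC0TC0}. Throughout, wherever the non-uniform argument was allowed to pick an arbitrary family of interpretations, we must instead exhibit a single $\FO[\PLUS,\TIMES]$-sentence or a single $\FO[\PLUS,\TIMES]$-interpretation.

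For $\textrm{PAC}^0 \subseteq \mathrm{FOCW}$: given $f,g$ in $\FO[\PLUS,\TIMES]$-uniform $\NumAC$ with $x \in L \Leftrightarrow f(x) > g(x)$, Theorem~\ref{thm:mainUniform} yields prenex $\FO[\PLUS,\TIMES]$-sentences $\psi_1,\psi_2$ with $\CWin(\psi_1,\mathcal{A}_w)=f(w)$ and $\CWin(\psi_2,\mathcal{A}_w)=g(w)$. The comparison formula written in the non-uniform proof of (1) uses only the $\#$-predicates $\#_{\psi_1},\#_{\psi_2}$, tuples of second-sort variables, and the built-in order $\PLUS,\TIMES$; it is therefore already a uniform $\mathrm{FOCW}$-sentence and no further work is needed.

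For $\mathrm{FOCW} \subseteq {\AC}^{\NumAC}$: from a prenex $\mathrm{FOCW}$-sentence $\varphi$ we build the oracle circuit exactly as in the uniform direction of $\FO[\PLUS,\TIMES] \subseteq \FO[\PLUS,\TIMES]$-uniform $\AC$ (cf.~\cite{ImmermanBuch} and the first part of the circuit constructed in the proof of Theorem~\ref{thm:mainUniform}), the only new ingredient being the $\#$-gates used in the constant-size second part to evaluate the occurrences of the $\#$-predicate. As in the non-uniform case, the constantly many distinct $\NumAC$-functions referenced by $\varphi$ are merged into one oracle via polynomially padded concatenation, which remains in $\FO[\PLUS,\TIMES]$-uniform $\NumAC$ by Lemma~\ref{lem:polyPad}; the index offsets introduced by this concatenation are $\FO[\PLUS,\TIMES]$-definable, so the resulting oracle-circuit family is $\FO[\PLUS,\TIMES]$-uniform.

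For ${\AC}^{\NumAC} \subseteq \TC$ we follow the sub-chain ${\AC}^{\NumAC} \subseteq {\TC}^{\NumAC} \subseteq {\TC}^{\TC} \subseteq \TC$. The first inclusion is immediate, since an $\AC$ oracle-circuit family is also a $\TC$ oracle-circuit family. The second uses that ITADD and ITMULT lie in $\FO[\PLUS,\TIMES]$-uniform $\TC$, so the bits of a $\NumAC$-function are computable by a uniform $\TC$ circuit family that takes the output index as an additional input; hence a $\NumAC$-oracle can be replaced by such a $\TC$-oracle. The last inclusion is exactly the statement of Lemma~\ref{lem:oracle}, whose proof is already carried out in the uniform setting. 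I expect the main obstacle to be the uniformity bookkeeping in these last two steps: one has to verify that composing the circuit-building $\FO[\PLUS,\TIMES]$-interpretations with the oracle-replacement construction again yields an $\FO[\PLUS,\TIMES]$-interpretation. This is handled by closure of $\FO$-interpretations under composition \cite{ImmermanBuch} together with the explicit ``padding by the fresh element $n$'' device already used in Lemma~\ref{lem:oracle} and in the proof of Theorem~\ref{thm:mainUniform}; one also needs to confirm that the uniform form of $\TC = \textrm{PAC}^0$ is indeed available from \cite{PAC0TC0} (or to re-derive it from the uniform ingredients there).
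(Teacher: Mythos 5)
Your proposal is correct and follows essentially the same route as the paper: the same chain $\TC \subseteq \mathrm{PAC}^0 \subseteq \mathrm{FOCW} \subseteq {\AC}^{\NumAC} \subseteq \TC$, with uniformity secured at each step via Theorem~\ref{thm:mainUniform}, the uniform version of Lemma~\ref{lem:polyPad}, the $\FO[\PLUS,\TIMES]$-definability of the concatenation index offsets, uniform ITADD/ITMULT, and Lemma~\ref{lem:oracle}. The only cosmetic difference is that the paper sources the uniform $\TC \subseteq \mathrm{PAC}^0$ from \cite{ABL98} rather than from the non-uniform reference, exactly the point you flagged as needing confirmation.
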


\begin{proof}
We proof this analogously to Theorem \ref{thm:threshold}. We explain how the constructions can be made uniform, where neccessary. Recall the chain of inclusions from before:
\begin{align*}
\TC & \subseteq \mathrm{PAC}^0\\
& \stackrel{(1)}{\subseteq} \mathrm{FOCW}\\
& \stackrel{(2)}{\subseteq} \AC^\NumAC\\
& \stackrel{(3)}{\subseteq} \TC
\end{align*}
The $\FO[\PLUS, \TIMES]$-uniform version of $\TC \subseteq \mathrm{PAC}^0$ is known from \cite{ABL98}. We will now show that the rest of the above inclusions can be made uniform as well.

\textit{Proof of (1)}: Since $\NumAC = \WinFO$ also holds in the uniform case, we can use the same idea. Also, the formula given in the proof of Theorem \ref{thm:threshold} only uses non-uniformity for the order-relation and can thus be used for the uniform case as well.

\textit{Proof of (2)}: This can be done in the same way as in the non-uniform case. Since Lemma \ref{lem:polyPad} also holds uniformly, the only problem left is choosing the right indices for the oracle gates. Lets first assume, that the formula only uses (possibly multiple bits of) one oracle. In this case, the index is given by a tuple of quantified variables. These are part of the representation of any input gate and can be directly accessed to uniformly describe the index.

If there are two different oracles used within the formula, we use a polynomially padded concatenation of them again. Let $f, g \in \NumAC$ be the oracle functions and $f(x) 0^{p(|x|) - |g(x)|} g(x|)$ their polynomially padded concatenation. All oracle gates that query $g$ can use the same index as in the original formula. All oracle gates that query $f$ have to add $p(|x|)$ to the index. This can be easily done in $\FO[\PLUS, \TIMES]$, so the index can be described uniformly. Now, for an arbitrary number of $\#$-predicates, the above can be applied inductively.

\textit{Proof of (3)}: This can be seen with the same chain of inclusions as in the non-uniform version:
\begin{align*}
{\AC}^\NumAC & \stackrel{\textrm{(i)}}{\subseteq} {\TC}^{\NumAC}\\
& \stackrel{\textrm{(ii)}}{\subseteq} {\TC}^{\TC}\\
& \stackrel{\textrm{(iii)}}{\subseteq} \TC
\end{align*}
(i) is again trivial. (ii) is possible, since ITADD and ITMULT are in uniform \TC by \cite{HesseDivTC0} and the construction from the non-uniform proof can be made uniform. (iii) is an application of Lemma \ref{lem:oracle}.
\end{proof}

\section{Conclusion}
\label{sect:concl}

Arithmetic classes are of current focal interest in computational complexity, but no model-theoretic characterization for any of these was known so far. We addressed the maybe most basic arithmetic class \NumAC and gave such a characterization, and, based on this, a new characterization of the (Boolean) class \TC. 

This immediately leads to a number of open problems:
\begin{itemize}
\item We mentioned the logical characterization of $\NumP$ in terms of counting assignments to free relations. We here count assignments to free function variables. Hence both characterizations are of a similar spirit. Can this be made more precise? Can our class $\WinFO[\PLUS, \TIMES]$ be placed somewhere in the hierarchy of classes from \cite{DescCompNumP}?
\item Can larger arithmetic classes be defined in similar ways? The next natural candidate might be $\#\NC$ which corresponds to counting paths in so called non-uniform finite automata \cite{camcthvo96}. Maybe this will lead to a descriptive complexity characterization.
\item Still the most important open problem in the area of circuit complexity is the question if $\TC=\NC$. While we cannot come up with a solution to this, it would be interesting to reformulate the question in purely logical terms, maybe making use of our (or some other) logical characterization of $\TC$.
\end{itemize}

\subsection*{Acknowledgements}

We are grateful to Lauri Hella (Tampere) and Juha Kontinen (Helsinki) for helpful discussion, leading in particular to Definition~\ref{def:FOCW}.
We also thank the anonymous referees for helpful comments.

\bibliographystyle{elsarticle-num}
\bibliography{references,cc}

\end{document}